\documentclass[letterpaper]{article}

\usepackage[margin=1in]{geometry}
\usepackage[utf8]{inputenc}
\usepackage{amsmath,amssymb,mathtools}
\usepackage{amsthm}
\usepackage{thmtools}
\usepackage{microtype}
\usepackage{booktabs}
\usepackage{array}
\usepackage{placeins}
\usepackage{natbib}
\usepackage[colorlinks=true,allcolors=blue]{hyperref}
\usepackage[capitalize,noabbrev,nameinlink]{cleveref}
\usepackage{xcolor}

\numberwithin{equation}{section}

\declaretheorem[numberlike=equation]{theorem}
\declaretheorem[numberlike=equation]{lemma}
\declaretheorem[numberlike=equation]{corollary}
\declaretheorem[numberlike=equation]{proposition}
\declaretheorem[numberlike=equation]{conjecture}

\declaretheoremstyle[spaceabove=6pt,spacebelow=6pt,bodyfont=\normalfont]{defstyle}
\declaretheorem[numberlike=equation,style=defstyle]{definition}
\declaretheorem[numberlike=equation,style=defstyle]{example}
\newenvironment{algorithmblock}[1]
  {\par\smallskip\noindent\textbf{#1.}\ }
  {\par\smallskip}

\crefname{theorem}{Theorem}{Theorems}
\crefname{lemma}{Lemma}{Lemmas}
\crefname{corollary}{Corollary}{Corollaries}
\crefname{proposition}{Proposition}{Propositions}
\crefname{conjecture}{Conjecture}{Conjectures}
\crefname{definition}{Definition}{Definitions}
\crefname{example}{Example}{Examples}

\DeclarePairedDelimiter{\card}{\lvert}{\rvert}
\newcommand{\E}{\mathbb{E}}
\newcommand{\Prb}{\mathbb{P}}
\newcommand{\PJR}{\mathrm{PJR{+}}}
\newcommand{\EJR}{\mathrm{EJR{+}}}

\title{Learning Proportional Committees from Violation Feedback}
\author{Frank Connor\\
Massachusetts Institute of Technology\thanks{The author thanks Bailey Flanigan for valuable discussions and detailed comments that substantially improved the presentation of this work.}}
\date{}

\begin{document}
\maketitle
\begin{abstract}
We study violation-feedback learning of proportionally representative
approval-based committees. In each round, a learner proposes a committee
of size $k$. An oracle either accepts the proposal or adversarially selects a
representation violation with respect to a single fixed hidden approval
profile. We compare \emph{full-witness feedback}, which reveals the violation
level, an omitted candidate, and the affected voter group, with
\emph{candidate-only feedback}, which reveals only that candidate. The target
notions are proportional justified representation plus (PJR+) and extended
justified representation plus (EJR+).

In every setting we study, the number of rejected proposals can be bounded
solely in terms of $k$, with no dependence on the numbers of voters and
candidates. For PJR+, the optimal deterministic and randomized rejection
complexities equal $k$ under both feedback models. For EJR+, the picture is more nuanced. Under full-witness feedback, we prove
an $\Omega(k^{3/2})$ deterministic lower bound and give a deterministic
polynomial-time algorithm using $O(k^2\log k)$ rejections. Under candidate-only
feedback, randomization achieves $O(k^2\log k)$ expected rejections via
uniform random deletion, while deterministic exhaustive branching gives a
$2^{O(k^2(\log k)^2)}$ rejection bound. Even with full-witness feedback,
randomized learners may require $k$ rejections.
\end{abstract}

\section{Introduction}

Approval-based committee selection is usually a one-shot process: every voter
submits an approval ballot, and an aggregation rule selects a committee from
the resulting profile. Forming and reporting
these approval sets up front may be costly, particularly when voters have
limited information, incomplete preferences, or little incentive to evaluate
candidates who may prove irrelevant to the eventual outcome. Recent work on
preference elicitation and incomplete approval information reduces this burden
through selective queries, assumptions on the structure of missing
preferences, and sampled approval ballots; see, e.g.,
\citep{ConitzerSandholm2002,BoutilierRosenschein2016,HalpernEtAl2026,%
SpringhamEtAl2025,SpringhamEtAl2026,Kehne2026}.

We introduce an orthogonal approach: \emph{violation-feedback learning}. The learner cannot query individual approvals. Instead, it repeatedly proposes a full committee and receives feedback
when the proposal fails proportional representation. A sufficiently large
underrepresented group may then identify an omitted candidate whom its members
approve, prompting the learner to revise the committee. Thus, the learner need
not receive the complete profile up front: feedback about candidate support is
elicited only when a candidate becomes relevant to a concrete representation
failure.

This interaction belongs to the trial-and-error framework of \citet{BeiChenZhang2013}, in which an algorithm proposes solutions to a
hidden instance and learns from violated constraints. \citet{LiuKempeMicha2026} recently applied this framework to fair
division. We ask the analogous question for proportional representation: can
social-choice guarantees be reached from violations of proposed outcomes
without first revealing the full underlying instance?

We formalize this interaction as follows. The learner knows the voter set
$N$, the candidate set $C$, their sizes $n$ and $m$, the target committee size
$k$, and the target proportionality axiom, but not the fixed approval profile.
We focus on \emph{proportional justified representation plus} (PJR+) and
\emph{extended justified representation plus} (EJR+). Suppose a group large
enough to claim $\ell$ seats unanimously approves an omitted candidate. PJR+
requires the group to collectively approve at least $\ell$ committee members,
whereas EJR+ requires some member of the group to personally approve at least
$\ell$ committee members. EJR+ implies PJR+, while PJR+ and EJR+ strengthen
proportional justified representation (PJR) and extended justified
representation (EJR), respectively. Although they provide stronger guarantees,
for the plus variants, a violation can be found in polynomial time whenever
one exists, whereas verifying ordinary PJR or EJR is coNP-complete
\citep{AzizEtAl2017,AzizEtAl2018,BrillPeters2023}. Moreover, every profile
admits an EJR+ committee~\citep{BrillPeters2023}.

We model this objection process through a truthful oracle, which accepts a
committee satisfying the target axiom and otherwise returns an adversarially
chosen violation for the fixed profile. A violation consists of a claimed
number of seats $\ell$, an omitted candidate $c$, and an underrepresented
witness group $S$. Under
\emph{full-witness feedback}, the learner observes the entire triple
$(\ell,c,S)$; under \emph{candidate-only feedback}, it observes only $c$,
hiding both the affected voters and the size of their claim. Full-witness
feedback can model a centralized auditor with access to the hidden ballots and
also serves as a strong-information benchmark. Candidate-only feedback captures
a privacy-limited objection process in which only the commonly approved omitted
candidate is disclosed.

We measure the interaction by its \emph{rejection complexity}: the worst-case
number of rejected proposals before the first acceptance. Our central question
is:

\begin{center}
\emph{How many rejected committees are necessary and sufficient to find a
proportionally representative committee when the approval profile is hidden?}
\end{center}

A priori, the number of rejections could grow with the electorate or candidate
set: larger elections contain more possible underrepresented groups and
omitted candidates, while each rejection identifies only one local failure.
Such a dependence would make the mechanism increasingly costly precisely in
the large, low-information elections that motivate it. Nevertheless, we find surprising rejection bounds that depend only on $k$, not on $n$ or
$m$.

\subsection{Our results}

For an axiom $\mathsf X$, let $\rho=\mathrm{full}$ and
$\rho=\mathrm{cand}$ denote full-witness and candidate-only feedback,
respectively. We write $Q_\rho^{\mathsf X}(k)$ and
$\widetilde Q_\rho^{\mathsf X}(k)$ for the optimal deterministic and
randomized rejection complexities. Formal definitions appear in the next section.

Our upper bounds follow a progression in the form of progress that the learner
can certify. For PJR+, returned candidates can be permanently retained; for
EJR+ with full witnesses, this is replaced by observable progress in a PAV
potential; with candidate-only feedback, only average potential progress
remains.

We first resolve PJR+ exactly. A learner can retain every candidate returned
by the oracle in all later proposals. A packing argument shows that at most $k$ such candidates can be
returned, while a matching lower bound holds even for randomized learners
receiving full-witness feedback. Thus,
$$
Q_{\mathrm{full}}^{\PJR}(k)
=
Q_{\mathrm{cand}}^{\PJR}(k)
=
\widetilde Q_{\mathrm{full}}^{\PJR}(k)
=
\widetilde Q_{\mathrm{cand}}^{\PJR}(k)
=
k.
$$

We next turn to EJR+ under full-witness feedback. Permanent retention need
not remain feasible, but each full-witness response reveals enough approval
information to certify a PAV-improving swap; sufficiently strong PAV local
optimality, in turn, guarantees EJR+. This yields a deterministic
polynomial-time algorithm with at most $k^2H_k=O(k^2\log k)$ rejections. A
complementary construction shows that deterministic EJR+ learning may require
$\Omega(k^{3/2})$ rejections even with full-witness feedback.

We then study EJR+ under candidate-only feedback. The returned candidate still guarantees a PAV improvement on average
over the $k$ possible deletions, but the learner does not observe the witness
group needed to identify a good deletion. Deterministically, exhaustive branching gives
$Q_{\mathrm{cand}}^{\EJR}(k)\le 2^{O(k^2(\log k)^2)}$, albeit
inefficiently. Randomization exploits the same average guarantee directly:
uniform random deletion yields
$\widetilde Q_{\mathrm{cand}}^{\EJR}(k)\le k^2H_k=O(k^2\log k)$.
We also prove a randomized lower bound of $k$, even with full-witness feedback.
Finally, every learner that commits to a single insertion--deletion path may
require $\Omega(k^{3/2})$ rejections.

For PJR+, the strength of the feedback makes no difference: all four
complexities equal $k$. For EJR+, by contrast, the known deterministic upper
bounds differ substantially between the two feedback models, although it
remains open whether this reflects a genuine separation. Table~\ref{tab:results}
summarizes our bounds. An entry indexed by $\rho$ applies to both feedback
models, $\rho\in\{\mathrm{full},\mathrm{cand}\}$.

\begin{table*}[t]
\centering
\footnotesize
\renewcommand{\arraystretch}{1.25}
\setlength{\tabcolsep}{4pt}
\begin{tabular}{@{}
  >{\raggedright\arraybackslash}p{0.08\linewidth}
  >{\raggedright\arraybackslash}p{0.22\linewidth}
  >{\raggedright\arraybackslash}p{0.31\linewidth}
  >{\raggedright\arraybackslash}p{0.31\linewidth}
@{}}
\toprule
\textbf{Axiom}
&
\textbf{Complexity}
&
\textbf{Lower bound}
&
\textbf{Upper bound}
\\
\midrule

PJR+
&
$Q_{\rho}^{\PJR}(k)$
&
$k$ {\scriptsize(Thm.~\ref{thm:pjr-plus-lower-bound})}
&
$k$ {\scriptsize(Thm.~\ref{thm:pjr-plus-upper-bound})}
\\
\addlinespace[3pt]

&
$\widetilde Q_{\rho}^{\PJR}(k)$
&
$k$ {\scriptsize(Thm.~\ref{thm:randomized-full-witness-linear-lower-bound})}
&
$k$ {\scriptsize(Thm.~\ref{thm:pjr-plus-upper-bound})}
\\

\midrule

EJR+
&
$Q_{\mathrm{full}}^{\EJR}(k)$
&
$\Omega(k^{3/2})$
{\scriptsize(Cor.~\ref{cor:asymptotic-full-witness-ejr-plus-lower-bound})}
&
$O(k^2\log k)$
{\scriptsize(Thm.~\ref{thm:full-witness-ejr-plus-upper-bound})}
\\
\addlinespace[3pt]

&
$Q_{\mathrm{cand}}^{\EJR}(k)$
&
$\Omega(k^{3/2})$
{\scriptsize(Cor.~\ref{cor:asymptotic-full-witness-ejr-plus-lower-bound})}
&
$2^{O(k^2(\log k)^2)}$
{\scriptsize(Thm.~\ref{thm:deterministic-candidate-only-ejr-plus-upper-bound})}
\\
\addlinespace[3pt]

&
$\widetilde Q_{\rho}^{\EJR}(k)$
&
$k$ {\scriptsize(Thm.~\ref{thm:randomized-full-witness-linear-lower-bound})}
&
$O(k^2\log k)$
{\scriptsize(Thm.~\ref{thm:randomized-candidate-only-ejr-plus-upper-bound})}
\\

\bottomrule
\end{tabular}
\caption{Rejection-complexity bounds.}
\label{tab:results}
\end{table*}

\subsection{Related work}

The closest application of violation feedback is the fair-allocation model of
\citet{LiuKempeMicha2026}. \citet{MichaVarsamis2025} study voting from local
improvement feedback to proposed outcomes, with the goal of computing a
prescribed voting rule. Our oracle instead returns a group certificate of a
proportionality violation, and the learner seeks any committee satisfying the
axiom. More broadly, our model belongs to the trial-and-error framework of
\citet{BeiChenZhang2013}. Unlike shortest-path interactive learning
\citep{EmamjomehZadehKempe2017} and equivalence-query learning
\citep{Angluin1988}, however, a violation need not point toward a fixed target.

Complementary work reduces information requirements through incomplete,
structured, adaptively elicited, or sampled approval data
\citep{ConitzerSandholm2002,BoutilierRosenschein2016,HalpernEtAl2026,%
SpringhamEtAl2025,SpringhamEtAl2026,Kehne2026}. Our learner instead receives
feedback only on complete proposed committees. 

The proportionality axioms come from approval-based committee voting; see
\citet{LacknerSkowron2023}. \citet{AzizEtAl2017} introduced justified
representation and EJR, \citet{SanchezFernandezEtAl2017} introduced PJR, and
\citet{BrillPeters2023} formulated PJR+ and introduced EJR+. For our upper
bounds, we build on thresholded PAV local search
\citep{AzizEtAl2018,CaseyElkind2025}. Whereas arbitrarily small PAV
improvements may require much longer searches \citep{KraiczyElkind2024},
every improvement certified here is at least $n/k^2$. Under candidate-only
feedback, these gains are hidden, so our deterministic learner retains every
deletion branch instead.

\section{The Violation-Feedback Model}
\label{sec:model}

\subsection{Proportional representation}
\label{subsec:proportional-representation}

For a positive integer $r$, let $[r]:=\{1,\ldots,r\}$. For a finite set
$X$, let $\binom{X}{k}$ denote the family of its $k$-element subsets.
An \emph{approval-based committee election} is a tuple $(N,C,A,k)$,
where $N=[n]$ is the set of voters, $C=[m]$ is the set of candidates,
$A=(A_i)_{i\in N}$ is an approval profile with $A_i\subseteq C$ for
each $i\in N$, and $k\in[m]$ is the target committee size. A proposed
\emph{feasible committee} is a set $W\in\binom{C}{k}$.

We now formalize the certificates returned when a committee fails PJR+ or
EJR+. The definitions encode, respectively, collective and voter-specific
shortfalls in representation.

\begin{definition}[PJR+ violation~\cite{AzizLee2021,BrillPeters2023}]
\label{def:pjr-plus}
A \emph{PJR+ violation} of a feasible committee $W$ is a triple
$(\ell,c,S)$, where $\ell\in[k]$, $c\in C\setminus W$, and
$S\subseteq N$, such that
$$
|S|\ge \ell\frac{n}{k},
\qquad
c\in\bigcap_{i\in S}A_i,
\qquad
\left|W\cap\bigcup_{i\in S}A_i\right|<\ell.
$$
\end{definition}

In words, $S$ is large enough to claim $\ell$ representatives and unanimously
approves the omitted candidate $c$, but collectively approves fewer than
$\ell$ members of $W$.

\begin{definition}[EJR+ violation~\cite{BrillPeters2023}]
\label{def:ejr-plus}
An \emph{EJR+ violation} of a feasible committee $W$ is a triple
$(\ell,c,S)$, where $\ell\in[k]$, $c\in C\setminus W$, and
$S\subseteq N$, such that
\[
\begin{aligned}
|S|&\ge \ell\frac{n}{k},
&c&\in\bigcap_{i\in S}A_i,\\
|A_i\cap W|&<\ell
&&\text{for every }i\in S.
\end{aligned}
\]
\end{definition}

Here the deficiency is individual: every voter in $S$ approves fewer than
$\ell$ members of $W$. A committee satisfies PJR+ or EJR+ exactly when it
admits no violation of the corresponding type. In either type of violation,
$\ell$ is the \emph{level}, $c$ is the \emph{returned candidate}, and $S$ is
the \emph{witness group}. For example, a response $(2,c,S)$ says that $S$ is
large enough to claim two seats, all its members approve the omitted candidate
$c$, and the relevant representation requirement fails at level two. The
``plus'' certificates require the voters to agree only on the single omitted
candidate $c$, rather than on $\ell$ common candidates. EJR+ implies both EJR and PJR+, and PJR+ implies PJR; however,
EJR and PJR+ are incomparable~\cite{BrillPeters2023}.

\subsection{Feedback model and rejection complexity}
\label{subsec:feedback-complexity}

Fix a target axiom $\mathsf X\in\{\PJR,\EJR\}$. For a profile $A$ and a
feasible committee $W$, let
$$
\mathcal V_{\mathsf X}(A,W)
\subseteq [k]\times C\times 2^N
$$
be the set of $\mathsf X$-violations against $W$ under profile $A$. Thus,
$W$ satisfies $\mathsf X$ exactly when
$\mathcal V_{\mathsf X}(A,W)=\varnothing$.

One round proceeds as follows. Based on the proposal--response pairs observed
so far, the learner proposes a committee $W$. The oracle either accepts $W$ or
selects a violation $(\ell,c,S)$. After a rejection, the learner observes the
entire triple under full-witness feedback and only $c$ under candidate-only
feedback. The sequence of past proposal--response pairs is the interaction
history.

For fixed $n,m$, let $\mathcal A_{n,m}:=(2^C)^N$ be the set of approval
profiles, and let $\Delta(Y)$ denote the distributions over a finite set $Y$.
The corresponding non-accepting response alphabets are
$$
\mathcal R_{\mathrm{full}}=[k]\times C\times2^N,
\qquad
\mathcal R_{\mathrm{cand}}=C.
$$

For a feedback model $\rho\in\{\mathrm{full},\mathrm{cand}\}$, the
set of interaction histories is
$$
\mathcal H_\rho
:=
\bigcup_{t\ge0}
\left(\binom{C}{k}\times\mathcal R_\rho\right)^t.
$$

A deterministic learner under feedback model $\rho$ maps each history to its
next proposed committee; let
$\mathfrak L_{\rho}^{\mathrm{det}}(n,m,k)$ be the set of all maps
$\mathcal H_\rho\to\binom{C}{k}$. A randomized learner maps each history to a
distribution over the next committee and then draws using private randomness;
let $\mathfrak L_{\rho}^{\mathrm{rand}}(n,m,k)$ be the set of all maps
$\mathcal H_\rho\to\Delta(\binom{C}{k})$.

The profile $A$ is fixed throughout the interaction but hidden from the
learner. For a proposed committee $W$, the available full-witness responses
are the elements of $\mathcal V_{\mathsf X}(A,W)$, while a candidate-only
response $c$ is available when $(\ell,c,S)\in\mathcal V_{\mathsf X}(A,W)$
for some $\ell$ and $S$. An oracle strategy is a map
$$
\mathcal O:
\mathcal H_\rho\times\binom{C}{k}
\longrightarrow
\{\textsc{accept}\}\cup\mathcal R_\rho.
$$
It is \emph{truthful for $A$} if it returns \textsc{accept} exactly when
$\mathcal V_{\mathsf X}(A,W)=\varnothing$ and otherwise returns an available
response for $W$. Its dependence on the history allows the oracle to choose
adaptively and adversarially among the responses available for the current
committee.

For $\rho\in\{\mathrm{full},\mathrm{cand}\}$, let
$\mathfrak O_{\rho}^{\mathsf X}(A)$ denote the set of oracle strategies for
$A$ under feedback model $\rho$.

For a learner $\mathcal L$, profile $A$, and oracle
$\mathcal O\in\mathfrak O_{\rho}^{\mathsf X}(A)$, let
$\tau(\mathcal L,A,\mathcal O)$ be the number of rejected proposals before the
first acceptance, with $\tau(\mathcal L,A,\mathcal O)=\infty$ if no proposal
is accepted. When $\mathcal L$ is randomized, $\tau$ is a random variable.

\begin{definition}[Deterministic rejection complexity]
\label{def:deterministic-rejection-complexity}
The \emph{deterministic rejection complexity} of $\mathsf X$ under feedback
model $\rho$ is
$$
Q_{\rho}^{\mathsf X}(k)
:=
\sup_{\substack{n\ge1\\ m\ge k}}
\inf_{\mathcal L\in\mathfrak L_{\rho}^{\mathrm{det}}(n,m,k)}
\sup_{A\in\mathcal A_{n,m}}
\sup_{\mathcal O\in\mathfrak O_{\rho}^{\mathsf X}(A)}
\tau(\mathcal L,A,\mathcal O).
$$
Thus, for each $n,m$, the learner may depend on $N$, $C$, $k$, and
$\mathsf X$, but it must succeed against every hidden profile and every
oracle strategy for that profile.
\end{definition}

For a randomized learner, the oracle observes the realized history and the
current proposal before choosing its response, but not the learner's future
random choices.

\begin{definition}[Randomized rejection complexity]
\label{def:randomized-rejection-complexity}
The \emph{randomized rejection complexity} of $\mathsf X$ under feedback
model $\rho$ is
\[
\begin{aligned}
\widetilde Q_{\rho}^{\mathsf X}(k)
:=\sup_{\substack{n\ge1\\ m\ge k}}
&\inf_{\mathcal L\in\mathfrak L_{\rho}^{\mathrm{rand}}(n,m,k)}
\sup_{A\in\mathcal A_{n,m}}\\[-2pt]
&\sup_{\mathcal O\in\mathfrak O_{\rho}^{\mathsf X}(A)}
\E\!\left[\tau(\mathcal L,A,\mathcal O)\right],
\end{aligned}
\]
where the expectation is over the learner's private randomness.
\end{definition}

These quantities count rejected proposals; including the final accepted
committee increases the total number of proposals by one. We impose no computational restriction on the learner.

Randomization cannot hurt, and full-witness feedback is at least as
informative as candidate-only feedback. Therefore,
\begin{equation}
\begin{aligned}
\widetilde Q_{\rho}^{\mathsf X}(k)
&\le Q_{\rho}^{\mathsf X}(k)
&&\text{for every }\rho\in\{\mathrm{full},\mathrm{cand}\},\\
Q_{\mathrm{full}}^{\mathsf X}(k)
&\le Q_{\mathrm{cand}}^{\mathsf X}(k),\\
\widetilde Q_{\mathrm{full}}^{\mathsf X}(k)
&\le \widetilde Q_{\mathrm{cand}}^{\mathsf X}(k).
\end{aligned}
\label{eq:complexity-monotonicity}
\end{equation}

A lower-bound construction ultimately consists of one fixed profile and one
truthful oracle strategy. For deterministic learners, constructing responses
online is only a proof device: the completed interaction history must be
realizable by one fixed profile. For randomized learners, the profile may
depend on the learner but must be fixed independently of the learner's
realized random choices.

\section{Retained-Set Progress}
\label{sec:pjr-plus-complexity}

The collective nature of PJR+ supports a simple retention rule. If a later
witness group intersects an earlier one, it collectively approves the
candidate returned for the earlier group. This intersection property yields
the packing condition used below.

\begin{algorithmblock}{Retention Algorithm}
Maintain the set $L$ of candidates returned so far, initially
$L=\varnothing$. Propose any feasible committee $W$ containing $L$; after a
rejection returning $c$, set $L\leftarrow L\cup\{c\}$ and repeat.
\end{algorithmblock}
Here ``retain'' means that every subsequent proposal contains the candidate.
Unlike profile-based greedy constructions of proportional committees
(e.g.,~\cite{BrillPeters2023}), this algorithm never sees the approval
profile: the oracle determines which candidate is exposed next. The theorem below shows that $L$ never contains more than $k$ candidates.
Its analysis uses the following lemma.

\begin{lemma}
\label{lem:ordered-witness-group-packing}
Let $S_1,\ldots,S_T\subseteq N$, and let
$\ell_1,\ldots,\ell_T\in[k]$. Suppose that
$\card{S_t}\ge \ell_t n/k$ for every $t$, and that $S_t$ intersects at
most $\ell_t-1$ of the sets $S_1,\ldots,S_{t-1}$. Then $T\le k$.
\end{lemma}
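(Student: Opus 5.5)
We argue by strong induction on $k$, peeling off the last set together with every earlier set that meets it. To make the induction clean, it is convenient to prove a slightly more general statement. Let $V$ be a finite set of size $n>0$, let $q>0$ be a real number with $kq\ge n$ (so $q$ plays the role of $n/k$, perhaps larger), and let $S_1,\ldots,S_T\subseteq V$ with $\ell_t\in[k]$ satisfy $\card{S_t}\ge \ell_t q$. Suppose also that each $S_t$ intersects at most $\ell_t-1$ of $S_1,\ldots,S_{t-1}$. We claim $T\le k$. The lemma is the special case $V=N$, $q=n/k$.

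For the inductive step, assume $T\ge1$ and look at the last set $S_T$. By hypothesis, at most $\ell_T-1$ of the earlier sets intersect $S_T$. Delete $S_T$ and all these earlier sets; this removes at most $\ell_T$ sets. Every surviving set $S_t$ (with $t<T$) is disjoint from $S_T$, so it lies inside $V':=V\setminus S_T$. Thus $\card{V'}\le n-\ell_T q\le (k-\ell_T)q$. Set $k':=k-\ell_T$.

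The surviving family, kept in its original order, still satisfies the ordered-intersection hypothesis. Indeed, each surviving set meets at most as many earlier surviving sets as it met earlier sets originally. The size bound $\card{S_t}\ge\ell_t q$ is unchanged. Moreover, since $S_t\subseteq V'$, we get $\ell_t q\le\card{V'}\le k'q$, so $\ell_t\le k'$, and the levels are valid for the smaller parameter $k'$.

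If $k'=0$, then $\card{V'}\le0$ and no surviving set can exist, because $\card{S_t}\ge q>0$. Otherwise, the induction hypothesis applies to the ground set $V'$ with parameters $k'<k$ and the same $q$, and gives at most $k'$ surviving sets. (If $V'$ happens to be empty, the same emptiness argument applies.) Hence
\[
T\le \ell_T+k'=k .
\]
The base case $k=0$ (or $T=0$) is trivial.

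The only delicate point is choosing the right invariant. The naive fractional charging, where each voter in $S_t$ receives $1/\card{S_t}$, only gives a harmonic bound of order $k H_k$. Deleting the last set together with its intersecting predecessors avoids this loss. It works because the removed voters account for at least $\ell_T$ seats' worth of $q$, while exactly $\ell_T$ sets are charged. Fixing $q$ and only requiring $kq\ge n$, rather than equality, is what lets the residual instance satisfy the hypotheses again.
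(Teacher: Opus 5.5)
Your proof is correct and is essentially the paper's argument. The paper runs the same peeling iteratively, repeatedly selecting the highest-index remaining set and deleting it together with every remaining set that meets it. It then does the accounting in one shot: the selected sets are pairwise disjoint, so $T\le\sum_{t\in I}\ell_t\le\frac{k}{n}\sum_{t\in I}|S_t|\le k$. That direct count is why the paper needs no auxiliary parameter $q$ with $kq\ge n$ to keep an induction going.
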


\begin{proof}
Starting with the entire family, repeatedly select a remaining set
$S_t$ of largest index and delete it together with every remaining set
that intersects it. Let $I$ be the set of selected indices. When $S_t$
is selected, every other remaining set has smaller index. By assumption,
$S_t$ intersects at most $\ell_t-1$ of them, so the step deletes at
most $\ell_t$ sets, including $S_t$ itself. Moreover, the selected sets
are pairwise disjoint. Since the deletion steps partition the original
family,
$$
T
\le \sum_{t\in I}\ell_t
\le \frac{k}{n}\sum_{t\in I}\card{S_t}
\le k.
$$
\end{proof}

\begin{theorem}
\label{thm:pjr-plus-upper-bound}
For every $k\ge1$, the Retention Algorithm incurs at most $k$ rejections under
candidate-only PJR+ feedback.
\end{theorem}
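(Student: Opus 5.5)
The plan is to apply \cref{lem:ordered-witness-group-packing} to the sequence of rejections. Although the learner sees only candidates, the oracle's response at each round $t$ is, by truthfulness, backed by some actual violation $(\ell_t,c_t,S_t)\in\mathcal V_{\PJR}(A,W_t)$ of the proposed committee $W_t$. We fix such a triple for every rejection. Suppose the Retention Algorithm incurs $T$ rejections, with proposals $W_1,\dots,W_T$ and returned candidates $c_1,\dots,c_T$. We need not assume a priori that $T\le k$ to keep proposals feasible. The argument shows that, as long as $|L|\le k$ was maintained, the violations satisfy the lemma's hypotheses. Hence at most $k$ rejections can ever occur, and in particular $L$ never exceeds size $k$.

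First, the returned candidates are distinct. Indeed, $c_t\notin W_t\supseteq\{c_1,\dots,c_{t-1}\}$, so $|L|$ grows by exactly one per rejection. Before round $t$ we have $|L|=t-1$, so a feasible $W_t\supseteq L$ exists whenever $t-1\le k$.

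The key step is the intersection property. Suppose $s<t$ and $S_t\cap S_s\neq\varnothing$. Pick $i\in S_t\cap S_s$. Since $c_s\in A_i$ and $c_s\in L\subseteq W_t$, we get $c_s\in W_t\cap\bigcup_{j\in S_t}A_j$. The $c_s$ are distinct, so the number of earlier indices $s$ with $S_s\cap S_t\ne\varnothing$ is at most $|W_t\cap\bigcup_{j\in S_t}A_j|\le\ell_t-1$, by the PJR+ violation condition. Together with $|S_t|\ge\ell_t n/k$, this gives exactly the hypotheses of \cref{lem:ordered-witness-group-packing}.

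The lemma then yields that any run of $t$ rejections satisfies $t\le k$. Formally, I would argue by contradiction. If $k+1$ rejections occurred, take the first $k+1$. Their proposals were all feasible, since each was made with $|L|\le k$. These $k+1$ violations satisfy the lemma's hypotheses, which forces $k+1\le k$, a contradiction. Therefore after at most $k$ rejections the next proposal, which is feasible, must be accepted.

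The only subtle point, and the one I would check most carefully, is that the argument never uses the levels or witness groups as observed quantities. They exist by truthfulness and serve purely for the analysis, which is why the same bound holds under candidate-only feedback.
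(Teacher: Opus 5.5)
Your proposal is correct and follows essentially the same argument as the paper: the retained returned candidates are distinct, the intersection property bounds the number of earlier intersecting witness groups by $\ell_t-1$, and \cref{lem:ordered-witness-group-packing} then gives $T\le k$. Your explicit check that every proposal stays feasible, since $|L|=t-1\le k$ before round $t$, is a small point the paper leaves implicit.
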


\begin{proof}
Suppose that the first $T$ proposals $W_1,\ldots,W_T$ are rejected.
Let $c_t$ be the candidate returned in round $t$, and choose an
associated PJR+ violation $(\ell_t,c_t,S_t)$.

Fix $s<t$. Since the algorithm retains every returned candidate,
$c_s\in W_t$, whereas $c_t\notin W_t$. Thus, the returned candidates
are pairwise distinct. Moreover, if $S_s\cap S_t\ne\varnothing$, then
a voter in the intersection approves $c_s$, and hence
$c_s\in W_t\cap\bigcup_{i\in S_t}A_i$. Since
$(\ell_t,c_t,S_t)$ is a PJR+ violation,
$\card{W_t\cap\bigcup_{i\in S_t}A_i}<\ell_t$. The distinctness of
the returned candidates therefore implies that
$S_t$ intersects at most $\ell_t-1$ earlier witness groups. Applying
\cref{lem:ordered-witness-group-packing} gives $T\le k$.
\end{proof}

\begin{theorem}
\label{thm:pjr-plus-lower-bound}
For every $k\ge1$, $Q_{\mathrm{full}}^{\PJR}(k)\ge k$.
\end{theorem}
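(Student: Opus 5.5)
We will use an adversary argument: for every deterministic learner we build one fixed hidden profile and a truthful full-witness oracle that forces at least $k$ rejections. We take $n=k$ voters and plenty of candidates, say $m\ge 2k$ (more if convenient). The profile is built online, one voter at a time, and the previously fixed voters are never changed. A group of size $\ge \ell n/k=\ell$ can then be a single voter when $\ell=1$, which is the case we exploit.

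The construction runs as follows. Voters $1,\dots,t-1$ have already been fixed with approval sets $A_1,\dots,A_{t-1}$, and the learner proposes $W_t$ in round $t\le k$. Because $m\ge 2k$ and $|W_t|=k$, some candidate $c_t\notin W_t$ lies outside $A_1\cup\dots\cup A_{t-1}$; we keep these sets to at most one candidate each, so their union has at most $t-1<k$ elements and such a $c_t$ exists. We fix $A_t:=\{c_t\}$ and answer with $(1,c_t,\{t\})$. This triple is a PJR+ violation of $W_t$: we have $|\{t\}|=1\ge n/k$, voter $t$ approves $c_t$, and $W_t\cap A_t=\varnothing$, which has size $0<1$. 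All voters not yet fixed approve nothing; in particular, after round $k$ every voter is fixed.

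One could worry about whether a response given in round $t$ is still legitimate for the final profile. It is, because the condition only involves $A_t$ and $W_t$, and $A_t$ is never altered afterwards. So every response is an available response for the final fixed profile $A$. The learner is deterministic, so its proposals are a function of the history; the online choices therefore define a single oracle strategy for the fixed $A$, and this strategy is truthful. Hence the first $k$ proposals are all rejected, and $\tau\ge k$.

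We also need the final profile to admit a PJR+ committee, for instance $\{c_1,\dots,c_k\}$, so that acceptance stays possible. This is not strictly needed for the lower bound, but it shows the instance is sensible. The only mild obstacle is making sure the adversary's choices are consistent with one fixed profile; this is handled by fixing voters irrevocably, since each violation involves only one already-fixed voter. Taking the supremum over $n,m$ in \cref{def:deterministic-rejection-complexity} then gives $Q_{\mathrm{full}}^{\PJR}(k)\ge k$.
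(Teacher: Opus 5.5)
Your proof is correct, but it takes a different route from the paper's.

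\textbf{The paper's route.} It also takes $n=k$, but it answers every proposal with a top-level violation $(k,c_t,N)$ for an arbitrary $c_t\notin W_t$. The profile is fixed only at the end: every voter approves a common $k$-set $K\supseteq\{c_1,\dots,c_k\}$. Consistency then rests on the single observation $c_t\in K\setminus W_t$, which gives $\card{W_t\cap K}<k$.

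\textbf{Your route.} You use level-$1$ violations with pairwise disjoint singleton witness groups, and you fix each voter's ballot irrevocably at the moment it is used. The validity of each response $(1,c_t,\{t\})$ under the final profile is then immediate, and no retroactive choice is needed.

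\textbf{Comparison.} The two instances sit at opposite extremes of \cref{lem:ordered-witness-group-packing}, and both attain its bound $T\le k$. The paper's instance repeats $S_t=N$ at level $k$, while yours uses $k$ disjoint groups of size $n/k$ at level $1$. The paper's unanimous hidden-committee template is reused directly in \cref{thm:randomized-full-witness-linear-lower-bound}. Your construction would randomize just as well, by hiding voter $t$'s single approved candidate uniformly in a block $C_t$.

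\textbf{Two small remarks.}
\begin{itemize}
\item Your distinctness requirement $c_t\notin A_1\cup\dots\cup A_{t-1}$ is never used, so $m>k$ already suffices.
\item Your online choices define the oracle only on the realized history. You should say explicitly that it is completed elsewhere by accepting PJR+ committees and returning an arbitrary violation otherwise, as the paper does.
\end{itemize}
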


\begin{proof}
Fix a deterministic full-witness learner, let $n=k$, and take $m>k$.
For each $t\in[k]$, after the learner proposes $W_t$, choose any
$c_t\in C\setminus W_t$ and respond with $(k,c_t,N)$.

After round $k$, choose a set $K\in\binom{C}{k}$ containing all the
returned candidates $c_1,\ldots,c_k$, and let every voter approve
exactly $K$. Such a set exists because at most $k$ distinct candidates
were returned. For every $t\in[k]$, we have $c_t\in K\setminus W_t$,
and therefore
$$
\card{W_t\cap\bigcup_{i\in N}A_i}
=\card{W_t\cap K}
<k.
$$
Since $\card{N}=k$ and every voter approves $c_t$, the triple
$(k,c_t,N)$ is a PJR+ violation of $W_t$ under this profile.

Because the learner is deterministic, these responses generate exactly
the proposals $W_1,\ldots,W_k$. Complete the oracle strategy by
accepting every PJR+ committee and returning an arbitrary PJR+ violation
otherwise. The resulting truthful full-witness oracle forces $k$
rejected proposals on this fixed profile.
\end{proof}

Together, \cref{thm:pjr-plus-upper-bound,thm:pjr-plus-lower-bound} show
that deterministic PJR+ rejection complexity is exactly $k$ under both
feedback models. The randomized lower bound proved later completes the
four-way characterization stated in the introduction.

The Retention Algorithm does not extend to EJR+. The next example shows
that, even after $k$ returned candidates fill the committee, another EJR+ violation may return a new candidate; permanent retention is then no
longer feasible.

\begin{example}
\label{ex:retention-fails-ejr-plus}
Let $n=k=2$, with $A_1=\{p_1,c\}$ and $A_2=\{p_2,c\}$, and let $z$ be
unapproved. Candidate-only retention may produce
\[
\{z,p_1\}\xrightarrow{p_2}\{z,p_2\}
\xrightarrow{p_1}\{p_1,p_2\}.
\]
Each response is a level-$1$ EJR+ violation for the voter approving the
returned candidate. The retained set now fills the committee, but
$(2,c,\{1,2\})$ is another violation because each voter approves only one
committee member, while both approve $c$. Retaining $c$ as well is impossible.
\end{example}

\section{Observable Potential Progress}
\label{sec:full-witness-ejr-plus-upper-bound}

The preceding example shows that a returned candidate cannot always be kept
forever under EJR+. Under full-witness feedback, potential progress is
observable on the revealed partial profile: a response certifies $c\in A_i$
for every $i\in S$. Recording these approvals produces a partial profile on
which the learner can certify PAV-improving swaps using only the information
revealed so far.

\begin{definition}[PAV score~\cite{LacknerSkowron2023}]
\label{def:pav-score}
Let $H_0:=0$ and $H_r:=\sum_{j=1}^r 1/j$ for $r\ge1$. For an approval
profile $A=(A_i)_{i\in N}$ and a feasible committee $W$, its
\emph{proportional approval voting (PAV) score} is
$$
\Phi_A(W):=\sum_{i\in N}H_{\card{A_i\cap W}}.
$$
\end{definition}

A voter's first approved committee member contributes $1$, the second
contributes $1/2$, and so on. The score therefore rewards improvements for
poorly represented voters more strongly. For $c\in C\setminus W$ and
$d\in W$, write
$W-d+c:=(W\setminus\{d\})\cup\{c\}$.

\citet{AzizEtAl2018} show that a committee admitting no swap of
PAV gain at least $n/k^2$ satisfies EJR, and \citet[Section~3.1]{CaseyElkind2025} observe that the same argument 
implies the same for EJR+. By contraposition, an EJR+ violation guarantees the existence of
an improving swap. The next lemma proves the stronger statement needed here:
for every EJR+ violation $(\ell,c,S)$, the average over all $k$ swaps that
insert that same returned candidate $c$ is at least $n/k^2$.

\begin{lemma}
\label{lem:ejr-plus-average-swap-gain}
Let $A$ be an approval profile, let $W$ be a feasible committee, and let
$(\ell,c,S)$ be an EJR+ violation of $W$. Then
$$
\frac{1}{k}\sum_{d\in W}
\bigl(\Phi_A(W-d+c)-\Phi_A(W)\bigr)
\ge \frac{n}{k^2}.
$$
\end{lemma}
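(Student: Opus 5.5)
We follow the standard Aziz et al.\ averaging argument. Summing over all $d\in W$ avoids choosing a particular deletion. Write $r_i:=\card{A_i\cap W}$. For each $d\in W$, the change $\Phi_A(W-d+c)-\Phi_A(W)$ splits into a gain from adding $c$ and a loss from removing $d$. The gain is at least the gain when $c$ is added to the full committee $W$, since adding $c$ only helps voters approving $c$ and removing $d$ only lowers $r_i$.

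More precisely, the net change for voter $i$ is computed as follows.
\begin{itemize}
\item If $c\in A_i$ and $d\notin A_i$, the change is $1/(r_i+1)$.
\item If $c\in A_i$ and $d\in A_i$, the change is $0$.
\item If $c\notin A_i$ and $d\in A_i$, the change is $-1/r_i$.
\item Otherwise the change is $0$.
\end{itemize}
So the change is at least $\mathbf 1[c\in A_i]/(r_i+1)-\mathbf 1[d\in A_i]/r_i$.

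We sum this over $d\in W$. Each voter with $c\in A_i$ contributes $k/(r_i+1)$. Each voter with $r_i\ge1$ loses $\sum_{d\in W\cap A_i}1/r_i=1$ in total. Hence
\[
\sum_{d\in W}\bigl(\Phi_A(W-d+c)-\Phi_A(W)\bigr)\ge \sum_{i:\,c\in A_i}\frac{k}{r_i+1}-\card{\{i: r_i\ge1\}}\ge \sum_{i\in S}\frac{k}{r_i+1}-n .
\]
The first inequality also absorbs the case $d\in A_i$, $c\in A_i$: there the bound's terms are $1/(r_i+1)-1/r_i\le0$, which is below the true change $0$.

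Now apply the violation conditions. Every $i\in S$ has $r_i\le \ell-1$, so $k/(r_i+1)\ge k/\ell$. Also $\card{S}\ge \ell n/k$. Therefore
\[
\sum_{i\in S}\frac{k}{r_i+1}\ge \frac{\ell n}{k}\cdot\frac{k}{\ell}=n .
\]
This only yields a total of at least $0$, which is not enough.

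This is the main obstacle: we need the extra $n/k$. The fix is to sharpen the loss term instead of bounding it by $\card{\{i:r_i\ge1\}}\le n$. For $i\in S$ with $d\in A_i$, the exact change is $0$, not $1/(r_i+1)-1/r_i$. So we recount the voters in $S$ exactly. A voter $i\in S$ has change $1/(r_i+1)$ for the $k-r_i$ deletions $d\notin A_i$, and $0$ otherwise. Its total is therefore $(k-r_i)/(r_i+1)=(k+1)/(r_i+1)-1$. Voters outside $S$ contribute at least $-1$ each, which is the loss term above. Summing,
\[
\sum_{d\in W}\bigl(\Phi_A(W-d+c)-\Phi_A(W)\bigr)\ge \sum_{i\in S}\frac{k+1}{r_i+1}-n\ge \frac{\ell n}{k}\cdot\frac{k+1}{\ell}-n=\frac{n}{k}.
\]
Dividing by $k$ gives the claimed average of at least $n/k^2$.

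The only point to double-check is the per-voter accounting. For voters outside $S$ with $r_i=0$ the contribution is nonnegative, and voters not in $S$ who approve $c$ only gain, so the lower bound of $-1$ per voter outside $S$ is valid.
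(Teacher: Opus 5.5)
Your proof is correct and, once you discard the preliminary lossy bound, it is essentially the paper's argument: exact per-voter accounting gives $(k-r_i)/(r_i+1)$ for each voter in $S$, every voter outside $S$ contributes at least $-1$, and then $r_i\le\ell-1$ together with $\card{S}\ge \ell n/k$ yields a total of at least $n/k$. The only difference is cosmetic: you write the contribution of $S$ as $(k+1)/(r_i+1)-1$, whereas the paper first bounds it by $(k-\ell+1)/\ell$, and both lead to the same computation.
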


\begin{proof}
For each voter $i\in N$, let $t_i:=\card{A_i\cap W}$, and let
$\delta_i(d)$ denote the change in voter $i$'s PAV score when $d$ is
replaced by $c$. Suppose first that $c\in A_i$. If $d\notin A_i$, then
the number of approved committee members increases from $t_i$ to
$t_i+1$, so $\delta_i(d)=1/(t_i+1)$. If $d\in A_i$, then the voter
loses one approved candidate and gains another, so $\delta_i(d)=0$.
Since exactly $k-t_i$ candidates in $W$ are not approved by $i$,
$$
\sum_{d\in W}\delta_i(d)=\frac{k-t_i}{t_i+1}.
$$
For $i\in S$, the violation gives $t_i<\ell$, and hence this quantity is
at least $(k-\ell+1)/\ell$.

If $c\notin A_i$, swapping out an unapproved candidate has no effect,
whereas swapping out any of the $t_i$ approved candidates decreases the
score by $1/t_i$. Thus
$\sum_{d\in W}\delta_i(d)=0$ when $t_i=0$ and equals $-1$ otherwise.
Consequently, every voter in $S$ contributes at least
$(k-\ell+1)/\ell$ to the sum over deletions, while every voter outside
$S$ contributes at least $-1$. Using $\card{S}\ge\ell n/k$, we obtain
\[
\begin{aligned}
&\sum_{d\in W}\bigl(\Phi_A(W-d+c)-\Phi_A(W)\bigr)=\sum_{i\in N}\sum_{d\in W}\delta_i(d)\\
& \quad \ge \card{S}\frac{k-\ell+1}{\ell}
-\bigl(n-\card{S}\bigr)\\
&\quad=
\frac{\card{S}(k+1)-n\ell}{\ell}
\ge \frac{n}{k}.
\end{aligned}
\]
Dividing by $k$ proves the claim.
\end{proof}

\begin{algorithmblock}{Witness-Guided PAV Local Search}
Fix an ordering of all swaps. Maintain a \emph{revealed partial profile}
$B=(B_i)_{i\in N}$, where $B_i$ contains the approvals of voter $i$
certified so far, initially $B_i=\varnothing$ for every $i\in N$, together
with a feasible committee $W$. While some
swap satisfies $\Phi_B(W-d+c)-\Phi_B(W)\ge n/k^2$, perform the first such
swap. When none exists, propose $W$. If the oracle returns $(\ell,c,S)$, add
$c$ to $B_i$ for every $i\in S$ and resume the local search from the same
committee.
\end{algorithmblock}
The fixed ordering only makes the learner deterministic. Since $c\notin W$,
recording the newly certified approvals leaves the current score unchanged;
as shown below, it also makes the returned triple an EJR+ violation of
the updated revealed partial profile.

\begin{theorem}
\label{thm:full-witness-ejr-plus-upper-bound}
For every $k\ge1$, Witness-Guided PAV Local Search incurs at most
$k^2H_k$ rejections and runs in polynomial time.
\end{theorem}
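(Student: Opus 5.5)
The argument tracks the potential $\Phi_B(W)$ on the revealed partial profile $B$ and shows three facts: it never decreases, it grows by at least $n/k^2$ between consecutive rejections, and it is bounded by $nH_k$. Together these give at most $nH_k/(n/k^2)=k^2H_k$ rejections.

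\textbf{Step 1: monotonicity and invariance.} Each local-search swap raises $\Phi_B(W)$ by at least $n/k^2$ by definition. When the oracle returns $(\ell,c,S)$, we add $c$ to $B_i$ for $i\in S$. Since $c\notin W$, every $\card{B_i\cap W}$ is unchanged, so $\Phi_B(W)$ is unchanged. Moreover, $\Phi_B(W)\le\sum_i H_{\card{W}}=nH_k$ always holds, and initially $\Phi_B(W)=0$ since all $B_i$ are empty.

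\textbf{Step 2: each rejection forces a certified swap.} After the update, I check that $(\ell,c,S)$ is an EJR+ violation of $W$ with respect to $B$. Every $i\in S$ now has $c\in B_i$, we have $c\notin W$, and $\card{S}\ge\ell n/k$. Since $B_i\subseteq A_i$ for every voter (only truthful approvals are recorded), $\card{B_i\cap W}\le\card{A_i\cap W}<\ell$ for each $i\in S$. Applying \cref{lem:ejr-plus-average-swap-gain} to the profile $B$, the average gain over the $k$ swaps $W-d+c$ is at least $n/k^2$, so some swap achieves gain at least $n/k^2$ under $\Phi_B$. Hence the local search, on resuming, performs at least one swap before the next proposal.

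\textbf{Step 3: counting.} Each rejection is followed by at least one swap of gain at least $n/k^2$, and $\Phi_B(W)$ never decreases. So after $T$ rejections we have $\Phi_B(W)\ge Tn/k^2$, and combining with the bound $nH_k$ gives $T\le k^2H_k$. In particular the process terminates, so some proposal is accepted.

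\textbf{Step 4: running time.} Each local-search phase performs swaps that each gain at least $n/k^2$, so the total number of swaps over the whole run is at most $k^2H_k$. Each swap search checks $k(m-k)$ candidate swaps, and each PAV evaluation takes $O(nm)$ time. Each response update takes $O(n)$ time. Everything is therefore polynomial in $n$, $m$ and $k$.

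The main subtlety is Step 2. It requires $B_i\subseteq A_i$, which makes the EJR+ inequality transfer to $B$ in the right direction (fewer revealed approvals on $W$). It also requires that \cref{lem:ejr-plus-average-swap-gain} holds for an arbitrary profile, here $B$, so it can be applied with $B$ in place of $A$.
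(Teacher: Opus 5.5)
Your proposal is correct and follows the paper's proof essentially step for step: the invariant $B_i\subseteq A_i$, the invariance of $\Phi_B(W)$ under updates because $c\notin W$, transferring $(\ell,c,S)$ to an EJR+ violation of the updated revealed profile and applying \cref{lem:ejr-plus-average-swap-gain} to it, and charging each rejection to a distinct swap of gain at least $n/k^2$ against the bound $nH_k$. Your $O(nm)$ cost per PAV evaluation is looser than the paper's $O(n)$ per swap, but it is still polynomial, so the conclusion is unaffected.
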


\begin{proof}
The invariant $B_i\subseteq A_i$ holds for every voter $i$, since the
learner records only approvals certified by the oracle.

Suppose that $W$ is rejected with violation $(\ell,c,S)$, and let $B'$ be
the revealed partial profile obtained by adding $c$ to $B_i$ for every $i\in S$.
Since $c\notin W$, this update does not change any voter's representation
in $W$: $\card{B'_i\cap W}=\card{B_i\cap W}$ for every $i\in N$, and hence
$\Phi_{B'}(W)=\Phi_B(W)$.
For every $i\in S$, we have $c\in B'_i$ and
$$
\card{B'_i\cap W}
=
\card{B_i\cap W}
\le
\card{A_i\cap W}
<
\ell.
$$
Together with $\card{S}\ge\ell n/k$, this shows that $(\ell,c,S)$ is an
EJR+ violation of $W$ under $B'$. By
\cref{lem:ejr-plus-average-swap-gain}, some $d\in W$ satisfies
$\Phi_{B'}(W-d+c)-\Phi_{B'}(W)\ge n/k^2$. Thus every rejection forces a
swap of gain at least $n/k^2$ before the next proposal.

The score $\Phi_B$ is initially zero, is unchanged by updates to the
revealed partial profile, and increases by at least $n/k^2$ at every
performed swap. Since it
is always at most $nH_k$, if $R$ swaps are performed, then
$$
R\frac{n}{k^2}\le nH_k.
$$
Hence $R\le k^2H_k$. Every rejection forces at least one subsequent swap,
so the learner incurs at most $k^2H_k$ rejections.

Each local-search step examines at most $k(m-k)$ swaps, each evaluable in
$O(n)$ time. Since at most $k^2H_k$ swaps occur, the running time is
polynomial in $n$, $m$, and $k$.
\end{proof}

\subsection{A Superlinear Lower Bound}
\label{sec:level-sequence-lower-bound}

The preceding upper bound leaves open whether EJR+ might still admit the
linear rejection complexity of PJR+. The next construction rules this out: even with full-witness feedback,
deterministic EJR+ learning can require a superlinear number of rejections.
By \cref{eq:complexity-monotonicity}, the same lower bound holds under
candidate-only feedback.

We encode each returned candidate by the number of voters who approve it,
which we call its \emph{support size}. A numerical condition will ensure
that any sequence of support sizes
can be forced against a deterministic learner. Long sequences will then
give strong lower bounds.

For this construction, set $n=k$, so the quota $n/k$ is one. Although a
level-$\ell$ violation may contain more than $\ell$ voters, our adversary
will choose a witness group of exactly $\ell$ voters. Thus the same integer
$\ell$ records both the level of the violation and the support size of its
returned candidate.

Suppose that the candidates returned in earlier rounds have support sizes
$\lambda_1,\lambda_2,\ldots$. The final profile will leave every other
candidate unapproved, and each newly returned candidate will lie outside all
earlier proposals. If the candidate returned in round $s$
belongs to a committee $W$, it contributes exactly $\lambda_s$ to
$\sum_{i\in N}\card{A_i\cap W}$. A committee contains at most $k$
previously returned candidates, so this sum is at most the sum of the $k$
largest preceding support sizes. The following counting lemma determines
when enough voters remain underrepresented for another violation.

\begin{lemma}
\label{lem:low-satisfaction-count}
Let $d_1,\ldots,d_k$ be nonnegative integers and let $\ell\in[k]$. If
$\sum_{i=1}^k d_i\le\ell(k-\ell+1)-1$, then at least $\ell$ indices
$i$ satisfy $d_i<\ell$.
\end{lemma}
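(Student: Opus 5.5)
We argue by contraposition. Suppose that at most $\ell-1$ indices satisfy $d_i<\ell$; we show that $\sum_{i=1}^k d_i\ge \ell(k-\ell+1)$, contradicting the hypothesis.

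Let $J:=\{i\in[k]: d_i\ge\ell\}$ be the set of indices with $d_i\ge\ell$. Since at most $\ell-1$ indices lie outside $J$, we have $\card{J}\ge k-(\ell-1)=k-\ell+1$. Every $d_i$ is nonnegative, so the indices outside $J$ contribute at least zero, and each index in $J$ contributes at least $\ell$. Therefore
$$
\sum_{i=1}^k d_i\;\ge\;\sum_{i\in J}d_i\;\ge\;\ell\,\card{J}\;\ge\;\ell(k-\ell+1).
$$
This exceeds the assumed bound $\ell(k-\ell+1)-1$, which is the desired contradiction. Hence at least $\ell$ indices satisfy $d_i<\ell$.

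There is no real obstacle here; the only points to check are that nonnegativity is used to discard the low indices and that the count $\card{J}\ge k-\ell+1$ is valid. The latter requires $\ell\le k$, which holds because $\ell\in[k]$.

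The bound is also tight: setting $d_i=\ell$ for $k-\ell+1$ indices and $d_i=0$ for the remaining $\ell-1$ indices gives a sum of exactly $\ell(k-\ell+1)$ with only $\ell-1$ indices below $\ell$. This tightness is what makes the threshold $\ell(k-\ell+1)-1$ the natural numerical condition for the later lower-bound construction.
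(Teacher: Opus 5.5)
Your proof is correct and follows essentially the same argument as the paper: if fewer than $\ell$ indices had $d_i<\ell$, then at least $k-\ell+1$ indices would have $d_i\ge\ell$, and nonnegativity gives $\sum_i d_i\ge\ell(k-\ell+1)$, a contradiction. Your tightness example is also correct.
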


\begin{proof}
If fewer than $\ell$ indices satisfied $d_i<\ell$, then at least
$k-\ell+1$ indices would satisfy $d_i\ge\ell$, and hence
$\sum_{i=1}^k d_i\ge\ell(k-\ell+1)$.
\end{proof}

\begin{definition}
\label{def:admissible-level-sequence}
A nondecreasing sequence $\lambda_1,\ldots,\lambda_T\in[k]$ is
\emph{admissible} if, for every $t\in[T]$,
$$
\sum_{s=\max\{1,t-k\}}^{t-1}\lambda_s
\le \lambda_t(k-\lambda_t+1)-1.
$$
\end{definition}

The length of any admissible sequence lower-bounds the number of
rejected proposals.

\begin{theorem}
\label{thm:admissible-sequence-full-witness-lower-bound}
Every admissible sequence of length $T$ satisfies
$Q_{\mathrm{full}}^{\EJR}(k)\ge T$.
\end{theorem}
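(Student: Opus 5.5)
The plan is an adversary argument against an arbitrary deterministic full-witness learner, following the template of \cref{thm:pjr-plus-lower-bound}. We build the responses online and fix a single profile at the end. Take $n=k$, so the quota $n/k$ equals one, and take $m$ large, say $m\ge k(T+1)+T$. This guarantees that a fresh candidate is always available. In round $t\in[T]$, after the learner proposes $W_t$, the adversary:
\begin{itemize}
\item picks a candidate $c_t$ lying outside $W_1\cup\cdots\cup W_t$ and distinct from $c_1,\ldots,c_{t-1}$;
\item chooses a witness group $S_t$ of exactly $\lambda_t$ voters;
\item responds with $(\lambda_t,c_t,S_t)$.
\end{itemize}
The final profile will be $A_i:=\{c_t: i\in S_t\}$. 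Thus each $c_t$ has support size exactly $\lambda_t$, and every other candidate is unapproved.

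The key step is choosing $S_t$ so that all its members are underrepresented in $W_t$. For each voter $i$, let $d_i$ be the number of earlier returned candidates $c_s$ ($s<t$) that lie in $W_t$ and are approved by $i$. These approvals are already determined by $S_1,\ldots,S_{t-1}$. Summing over voters, $\sum_i d_i=\sum_{s<t,\,c_s\in W_t}\lambda_s$. Since $W_t$ holds at most $k$ of the $c_s$ and the sequence is nondecreasing, this sum is at most $\sum_{s=\max\{1,t-k\}}^{t-1}\lambda_s$. By admissibility, this is at most $\lambda_t(k-\lambda_t+1)-1$. Then \cref{lem:low-satisfaction-count} gives at least $\lambda_t$ voters with $d_i<\lambda_t$, and we let $S_t$ be any $\lambda_t$ of them.

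Next we verify that each response is a genuine EJR+ violation under the final profile. The point needing care is that later rounds must not invalidate earlier certificates, and freshness handles this. For $s>t$ we chose $c_s\notin W_t$, and also $c_t\notin W_t$. Hence $\card{A_i\cap W_t}$ counts only the $c_s$ with $s<t$, so it equals the $d_i$ computed in round $t$. Therefore every $i\in S_t$ has $\card{A_i\cap W_t}<\lambda_t$. Moreover, every $i\in S_t$ approves $c_t$, and $\card{S_t}=\lambda_t=\lambda_t n/k$. So $(\lambda_t,c_t,S_t)$ is an EJR+ violation of $W_t$.

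Finally, because the learner is deterministic, these responses reproduce exactly the proposals $W_1,\ldots,W_T$ on this fixed profile. We complete the oracle by accepting every EJR+ committee and otherwise returning an arbitrary EJR+ violation, which makes it truthful. It forces $T$ rejections, so $Q_{\mathrm{full}}^{\EJR}(k)\ge T$. I expect the main obstacle to be the bookkeeping that makes the online choices consistent with one profile, namely the freshness requirement on $c_t$ relative to all earlier proposals. The counting step itself is handled directly by \cref{lem:low-satisfaction-count}.
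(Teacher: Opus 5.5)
Your proposal is correct and follows essentially the same adversary argument as the paper. You use $n=k$, witness groups of exactly $\lambda_t$ voters chosen via \cref{lem:low-satisfaction-count} from the admissibility bound, fresh candidates $c_t$ outside $W_1\cup\cdots\cup W_t$ and all earlier returned candidates (the paper takes $|C|=(k+1)T$, and your larger $m$ works equally well), and the final profile $A_i=\{c_s: i\in S_s\}$; your check that later rounds do not invalidate earlier certificates and your completion to a truthful oracle match the paper's proof step for step.
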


\begin{proof}
Fix an admissible sequence $\lambda_1,\ldots,\lambda_T$ and an
arbitrary deterministic learner. Let $N=[k]$ and let $C$ contain
$(k+1)T$ candidates. After each proposal, the adversary chooses a
candidate $c_t$ and a voter set $S_t$ and returns
$(\lambda_t,c_t,S_t)$. We define these choices adaptively and then verify
them under one fixed approval profile.

Suppose that the first $t-1$ responses have been chosen. The candidates
$c_1,\ldots,c_{t-1}$ are distinct, and
$\card{S_s}=\lambda_s$ for every $s<t$. Let
$A_i^{t-1}:=\{c_s:s<t\text{ and }i\in S_s\}$ be the approvals assigned
to voter $i$ so far. Each candidate $c_s$ was chosen outside every
committee proposed up to round $s$.

Let $W_t$ be the learner's proposal in round $t$, and put
$d_i:=\card{A_i^{t-1}\cap W_t}$. Every previously returned candidate
$c_s\in W_t$ is approved by exactly $\lambda_s$ voters and therefore
contributes $\lambda_s$ to $\sum_i d_i$. Since $W_t$ contains at most
$k$ such candidates and the sequence is nondecreasing,
$$
\sum_{i\in N}d_i
=\sum_{\substack{s<t\\ c_s\in W_t}}\lambda_s
\le \sum_{s=\max\{1,t-k\}}^{t-1}\lambda_s
\le \lambda_t(k-\lambda_t+1)-1.
$$
The counting lemma therefore supplies at least $\lambda_t$ voters with
$d_i<\lambda_t$; the adversary chooses any $\lambda_t$ of them as
$S_t$.

The set
$W_1\cup\cdots\cup W_t\cup\{c_1,\ldots,c_{t-1}\}$ contains at most
$kt+t-1\le(k+1)T-1$ candidates. Since $\card{C}=(k+1)T$, some
candidate $c_t$ lies outside this set. The adversary declares that
$c_t$ is approved precisely by the voters in $S_t$ and returns
$(\lambda_t,c_t,S_t)$.

After round $T$, define $A_i:=\{c_s:i\in S_s\}$ for every voter $i$,
and let every other candidate be approved by no voter. Fix a round
$s$. Since $c_s$ was chosen outside $W_s$, we have $c_s\notin W_s$;
by construction, every voter in $S_s$ approves $c_s$. Since every
candidate $c_r$ introduced in a round $r\ge s$ was also chosen outside
$W_s$, for every $i\in S_s$ we have
$
\card{A_i\cap W_s}
=\card{A_i^{s-1}\cap W_s}
<\lambda_s.
$
Finally, $n=k$ and $\card{S_s}=\lambda_s$, so
$\card{S_s}=\lambda_s n/k$. Hence
$(\lambda_s,c_s,S_s)$ is an EJR+ violation of $W_s$ under $A$.

Because the learner is deterministic, returning these same responses
under $A$ causes it to make the same proposals $W_1,\ldots,W_T$.
Define the oracle on every other interaction history by accepting an
EJR+ committee and otherwise returning any EJR+ violation. This gives
one truthful full-witness oracle under which the learner incurs at
least $T$ rejected proposals, as desired.
\end{proof}

\subsubsection{An Explicit $\frac23 k^{3/2}$ Lower Bound}
\label{subsec:explicit-admissible-sequence}
\mbox{}\par
A simple block construction already gives a
$\Theta(k^{3/2})$ lower bound. Suppose we are adding copies of level $\ell$.
Before the last copy, at most $q_\ell-1$ entries in the preceding $k$-window
have level $\ell$, while every other entry has level at most $\ell-1$.
Setting the resulting worst-case sum at the admissibility threshold suggests
$$
q_\ell:=\max\{0,k-\ell(\ell-1)\}.
$$
We form a nondecreasing sequence containing $q_\ell$ copies of each level
$\ell\in[k]$.

\begin{theorem}
\label{thm:explicit-admissible-sequence-lower-bound}
The sequence above is admissible. Its length is
$$
B_k:=\sum_{\ell=1}^k q_\ell
=\frac23k^{3/2}+O(k).
$$
\end{theorem}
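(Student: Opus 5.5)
The plan is to check the admissibility inequality of \cref{def:admissible-level-sequence} entry by entry, bounding the window sum by a worst case that depends only on the entry's level and on its position within its block of equal levels. The length asymptotics will then follow from an elementary sum.

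\emph{Admissibility.} Levels with $q_\ell=0$ contribute no entries, so the sequence is nondecreasing with values in $[k]$.

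Fix a position $t$ holding the $j$-th copy of level $\ell$, so $1\le j\le q_\ell$. The window $\lambda_{\max\{1,t-k\}},\ldots,\lambda_{t-1}$ has at most $k$ entries. Exactly $j-1$ of these (if they all lie in the window, and at most $j-1$ in any case) equal $\ell$. Because the sequence is nondecreasing and the block of level $\ell$ begins before position $t$, every other window entry is at most $\ell-1$.

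Each window entry is nonnegative, and replacing any missing entries (when $t\le k$) by $\ell-1\ge0$ only increases the bound. Hence
\[
\sum_{s=\max\{1,t-k\}}^{t-1}\lambda_s\le (j-1)\ell+\bigl(k-(j-1)\bigr)(\ell-1)=k(\ell-1)+j-1 .
\]
The target is $\lambda_t(k-\lambda_t+1)-1=k\ell-\ell^2+\ell-1$. So it suffices that
\[
k(\ell-1)+j-1\le k\ell-\ell(\ell-1)-1,
\]
which rearranges to $j\le k-\ell(\ell-1)=q_\ell$. This holds by construction. This exact matching is the heuristic behind the choice of $q_\ell$ stated before the theorem, so the only care needed is the bookkeeping above: the window contains at most $j-1$ copies of $\ell$ and no larger levels.

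\emph{Length.} Let $L:=\max\{\ell\ge1:\ell(\ell-1)<k\}$, so that $q_\ell>0$ exactly for $\ell\le L$. Also $L\le k$, since $k(k-1)\ge k$ for $k\ge2$, and the case $k=1$ is trivial. Then
\[
B_k=\sum_{\ell=1}^{L}\bigl(k-\ell(\ell-1)\bigr)=Lk-\frac{(L-1)L(L+1)}{3}.
\]
From $(L-1)^2<L(L-1)<k\le L(L+1)<(L+1)^2$ we get $L=\sqrt k+O(1)$, hence $L^3=k^{3/2}+O(k)$ and $Lk=k^{3/2}+O(k)$. Therefore
\[
B_k=k^{3/2}-\tfrac13k^{3/2}+O(k)=\tfrac23k^{3/2}+O(k).
\]
The main (minor) obstacle is tracking the $O(1)$ error in $L$ and checking that it contributes only $O(k)$ after multiplication by $k$ or cubing. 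Combined with \cref{thm:admissible-sequence-full-witness-lower-bound}, this gives the $\Omega(k^{3/2})$ bound.
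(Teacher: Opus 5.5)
Your proof is correct and follows the paper's argument. The paper bounds the window sum by $(q_\ell-1)\ell+(k-q_\ell+1)(\ell-1)=\ell(k-\ell+1)-1$, which is your bound with the worst case $j=q_\ell$, and it computes the length as $rk-(r^3-r)/3$ with $r=\sqrt{k}+O(1)$ exactly as you do; your per-copy index $j$ and explicit check $|L-\sqrt{k}|<1$ just make the same bookkeeping more explicit.
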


\begin{proof}
Consider any copy of a level $\ell$ with $q_\ell>0$. Among the $k$
largest preceding levels, with zeros added if necessary, at most
$q_\ell-1$ equal $\ell$, while every remaining entry is at most
$\ell-1$. Their sum is therefore at most
\[
\begin{aligned}
(q_\ell-1)\ell+(k-q_\ell+1)(\ell-1)
&=k(\ell-1)+q_\ell-1\\
&=\ell(k-\ell+1)-1.
\end{aligned}
\]
Thus every copy is admissible.

Let $r$ be the largest integer satisfying $r(r-1)<k$. Then
$q_\ell>0$ exactly for $\ell\le r$, and hence
$$
B_k
=rk-\sum_{\ell=1}^r\ell(\ell-1)
=rk-\frac{r^3-r}{3}.
$$
Since $r=\sqrt{k}+O(1)$, this equals
$\frac23k^{3/2}+O(k)$. By
\cref{thm:admissible-sequence-full-witness-lower-bound},
$Q_{\mathrm{full}}^{\EJR}(k)\ge B_k$, and hence
$Q_{\mathrm{full}}^{\EJR}(k)\ge\frac23k^{3/2}-O(k)$.
The candidate-only bound follows because candidate-only feedback is no
more informative than full-witness feedback.
\end{proof}

\begin{corollary}
\label{cor:asymptotic-full-witness-ejr-plus-lower-bound}
As $k\to\infty$,
$$
Q_{\mathrm{full}}^{\EJR}(k)
\ge \frac23k^{3/2}-O(k).
$$
The same bound holds for $Q_{\mathrm{cand}}^{\EJR}(k)$.
\end{corollary}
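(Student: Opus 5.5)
The corollary follows directly from results already established, so the plan is to chain them together.

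First, invoke \cref{thm:explicit-admissible-sequence-lower-bound}. It shows that the block sequence containing $q_\ell=\max\{0,k-\ell(\ell-1)\}$ copies of each level $\ell$ is admissible in the sense of \cref{def:admissible-level-sequence}. It also shows that this sequence has length $B_k=rk-(r^3-r)/3$, where $r$ is the largest integer with $r(r-1)<k$.

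Second, apply \cref{thm:admissible-sequence-full-witness-lower-bound} to this sequence. It converts any admissible sequence of length $T$ into a fixed profile with $n=k$ and $(k+1)T$ candidates, together with a truthful full-witness oracle that forces $T$ rejections on every deterministic learner. Hence $Q_{\mathrm{full}}^{\EJR}(k)\ge B_k$.

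The only genuine computation is the asymptotic statement $B_k=\frac23k^{3/2}-O(k)$. From $r(r-1)<k\le r(r+1)$ we get $r=\sqrt{k}+\theta$ with $\theta$ bounded; in fact $|\theta|\le 1$ suffices. Substituting,
\[
rk-\frac{r^3}{3}=k^{3/2}+\theta k-\frac{1}{3}\bigl(k^{3/2}+3\theta k+O(\sqrt k)\bigr)=\frac23 k^{3/2}+O(\sqrt{k}),
\]
and the remaining term $r/3$ is $O(\sqrt k)$. So $B_k\ge\frac23k^{3/2}-O(k)$. The bound is actually slightly better than stated, but the weaker form is all the corollary needs.

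Finally, the candidate-only claim follows from the monotonicity relations \eqref{eq:complexity-monotonicity}: $Q_{\mathrm{cand}}^{\EJR}(k)\ge Q_{\mathrm{full}}^{\EJR}(k)$, because a candidate-only learner can be simulated by a full-witness learner that ignores the level and witness group.

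I expect no real obstacle. The only point needing care is bounding $r-\sqrt{k}$ uniformly, so that the error term is genuinely $O(k)$ rather than depending on how $k$ sits between consecutive pronic numbers. The two-sided inequality $r(r-1)<k\le r(r+1)$ handles this.
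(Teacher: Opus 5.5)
Your proposal is correct and follows the paper's proof: apply \cref{thm:admissible-sequence-full-witness-lower-bound} to the explicit block sequence of \cref{thm:explicit-admissible-sequence-lower-bound}, then transfer the bound to candidate-only feedback via \eqref{eq:complexity-monotonicity}. Your expansion is also right and sharper than needed: $rk-r^3/3$ is stationary at $r=\sqrt{k}$, so the $\theta k$ terms cancel and $B_k=\frac23k^{3/2}+O(\sqrt{k})$. One wording point: the profile in \cref{thm:admissible-sequence-full-witness-lower-bound} is constructed separately against each deterministic learner, not as one profile for all learners, which is exactly what the $\inf_{\mathcal L}\sup_A$ in the definition requires.
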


\begin{proof}
Apply \cref{thm:admissible-sequence-full-witness-lower-bound} to the
sequence in \cref{thm:explicit-admissible-sequence-lower-bound}. The
candidate-only statement follows from \cref{eq:complexity-monotonicity}.
\end{proof}

Appendix~\ref{sec:optimal-admissible-sequences} optimizes the
admissible-sequence construction and gives a refined asymptotic lower bound.

\section{Average Potential Progress}
\label{sec:candidate-only-ejr-plus}

Candidate-only feedback hides the witness group needed to compare deletions,
but the average-swap guarantee still holds for the hidden PAV score. We exploit
it first by randomization and then by deterministic branching, while also
identifying the limits of following a single update path.

\subsection{Uniform Random Deletion}
\label{subsec:uniform-random-deletion}

Start from any committee. When candidate $c$ is returned against $W$, choose
$d$ uniformly from $W$ and propose $W-d+c$; the oracle responds before this
fresh random choice. By \cref{lem:ejr-plus-average-swap-gain}, the hidden PAV
score then increases in expectation.

\begin{theorem}
\label{thm:randomized-candidate-only-ejr-plus-upper-bound}
For every $k\ge1$ and every hidden profile and adaptive oracle,
uniform random deletion terminates with probability one and incurs at most
$k^2H_k$ rejections in expectation.
\end{theorem}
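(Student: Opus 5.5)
The plan is a supermartingale/drift argument on the hidden PAV score $\Phi_A$. Let $W_0$ be the initial committee and $W_t$ the committee proposed after the $t$-th rejection. Let $\mathcal F_t$ be the history up to and including the oracle's response to $W_t$. Define $X_t:=\Phi_A(W_t)$, which is deterministic given $\mathcal F_t$'s proposals, lies in $[0,nH_k]$, and starts at $X_0\ge0$.

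\textbf{Step 1: drift per rejection.} Condition on the event that $W_t$ is rejected with returned candidate $c$. Since the oracle is truthful, $c$ arises from some EJR+ violation $(\ell,c,S)$ of $W_t$ under $A$. The oracle responds before the fresh choice of $d$, so conditional on $\mathcal F_t$ the deletion $d$ is uniform on $W_t$. By \cref{lem:ejr-plus-average-swap-gain},
\[
\E\bigl[X_{t+1}-X_t\mid\mathcal F_t\bigr]=\frac1k\sum_{d\in W_t}\bigl(\Phi_A(W_t-d+c)-\Phi_A(W_t)\bigr)\ge\frac{n}{k^2}.
\]
This holds whatever violation the adaptive oracle selected, since the lemma applies to every violation and only $c$ enters the swap.

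\textbf{Step 2: optional stopping.} Let $\tau$ be the number of rejections. Set $X_t:=X_\tau$ for $t\ge\tau$, and consider $Y_t:=X_{t\wedge\tau}-\frac{n}{k^2}(t\wedge\tau)$. By Step 1, $Y_t$ is a submartingale. Hence for each fixed $t$,
\[
\frac{n}{k^2}\E[t\wedge\tau]\le\E[X_{t\wedge\tau}]-X_0\le nH_k.
\]
Thus $\E[t\wedge\tau]\le k^2H_k$ for all $t$, and monotone convergence gives $\E[\tau]\le k^2H_k$. In particular $\tau<\infty$ almost surely, which gives termination with probability one. Using bounded stopping times $t\wedge\tau$ avoids any integrability issue with $\tau$ itself.

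\textbf{Main obstacle.} The delicate point is justifying the conditional drift against an adaptive oracle. I will handle it by making the filtration explicit. The oracle's response at round $t$ is a function of the realized history and $W_t$, hence $\mathcal F_t$-measurable. The learner's next deletion uses independent fresh randomness. So the uniform-average bound of \cref{lem:ejr-plus-average-swap-gain} applies pathwise to whichever $c$ was returned.

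A minor point is that $W_t-d+c$ is always feasible. This holds because $c\notin W_t$, so the new set again has exactly $k$ members. Also, the bound $\Phi_A\le nH_k$ holds because each voter approves at most $k$ committee members.
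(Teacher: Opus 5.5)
Your proposal is correct and follows the paper's proof. Both freeze the process at acceptance and apply \cref{lem:ejr-plus-average-swap-gain} conditionally on the history through the returned candidate to get drift $n/k^2$. Both then use $0\le\Phi_A\le nH_k$ to bound $\E[\tau\wedge T]\le k^2H_k$ and let $T\to\infty$; your submartingale $Y_t$ is the paper's telescoping sum $\frac{n}{k^2}\sum_{t<T}\Prb[t<\tau]$ in another form.
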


\begin{proof}
Fix the hidden profile $A$, write $\Phi:=\Phi_A$, and let $\tau$ be the
number of rejections.  Freeze the process after acceptance by setting
$W_t=W_\tau$ for $t\ge\tau$. Conditional on the history through the response
$c_t$ in a rejected round, the deletion is uniform in $W_t$; hence
\cref{lem:ejr-plus-average-swap-gain} gives
\[
\E[\Phi(W_{t+1})-\Phi(W_t)\mid\text{history through }c_t]
\ge \frac{n}{k^2}.
\]
Therefore, for every $T\ge1$,
\[
\E[\Phi(W_{\tau\wedge T})]-\Phi(W_0)
\ge \frac{n}{k^2}\sum_{t=0}^{T-1}\Prb[t<\tau]
=\frac{n}{k^2}\E[\tau\wedge T].
\]
Since $0\le\Phi(W)\le nH_k$, we obtain
$\E[\tau\wedge T]\le k^2H_k$. Letting $T\to\infty$ gives
$\E[\tau]\le k^2H_k$ and, in particular, almost-sure termination.
\end{proof}

\subsection{A Randomized Lower Bound}
\label{subsec:randomized-full-witness-linear-lower-bound}

Randomization still cannot reduce the complexity below $k$, even with full
witnesses. The construction hides one universally approved candidate in each
of $k$ large blocks.

\begin{theorem}
\label{thm:randomized-full-witness-linear-lower-bound}
For every $k\ge1$ and $\mathsf X\in\{\PJR,\EJR\}$,
$\widetilde Q_{\mathrm{full}}^{\mathsf X}(k)\ge k$.
\end{theorem}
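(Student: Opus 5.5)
We would prove \cref{thm:randomized-full-witness-linear-lower-bound} by averaging over a random hidden profile: each voter has a secret candidate inside a large private block, and each rejection reveals only one secret. Fix $k$, a randomized full-witness learner $\mathcal L$, and a block size $M$. Let $n=k$ and $m=kM$, and partition $C$ into disjoint blocks $C_1,\ldots,C_k$, each of size $M$. For each voter $i$, draw $h_i$ uniformly and independently from $C_i$, and let $A_i=\{h_i\}$. The goal is to show that the average of $\E[\tau]$ over this random profile is at least $k(1-k^2/M)$. Some fixed profile then attains this expectation, and letting $M\to\infty$ gives $\widetilde Q_{\mathrm{full}}^{\mathsf X}(k)\ge k$, because the definition takes a supremum over $m$.

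\textbf{Step 1: which committees are accepted.} We first check that, for both axioms, $W$ admits no violation if and only if $W=\{h_1,\ldots,h_k\}$. If $h_i\notin W$, then $(1,h_i,\{i\})$ is both a PJR+ and an EJR+ violation. This uses $\card{\{i\}}=1=n/k$ and the fact that voter $i$ approves no member of $W$. Conversely, if $W$ contains every $h_i$, consider any group $S$ that unanimously approves an omitted candidate. Such an $S$ must be empty, since distinct voters share no approvals and each voter's only approved candidate already lies in $W$. So no violation exists.

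\textbf{Step 2: the oracle.} When $W$ is proposed, the oracle returns $(1,h_j,\{j\})$, where $j$ is the smallest index whose secret $h_j$ has not yet been returned and satisfies $h_j\notin W$. If no such $j$ exists but some secret is missing from $W$, the oracle returns any violation. It accepts exactly when $W=\{h_1,\ldots,h_k\}$. This oracle is truthful by Step 1, it depends only on the profile and the history, and each rejection reveals at most one new secret.

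\textbf{Step 3: the bad event.} Let $G$ be the event that, within the first $k$ proposals, some proposal contains a secret $h_j$ that has not yet been revealed. On the complement of $G$, an accepted committee would have to contain all $k$ secrets. Each of these secrets must have been revealed beforehand, and at most one is revealed per rejection, so $\tau\ge k$. It therefore suffices to show $\Prb[G]\le k^2/M$.

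\textbf{Step 4: bounding $\Prb[G]$ (the main obstacle).} The difficulty is that the history conditions on the secrets. A rejection reveals that some secret was absent, and the oracle's choice of the smallest index depends on which unrevealed secrets lie in $W$. We handle this by coupling with a simulated oracle that assumes no unrevealed secret lies in the proposal. Such an oracle always returns the smallest unrevealed index. Until the first hit, the real and simulated runs coincide. In the simulated run, the sequence of proposals and the set of revealed indices are determined by the learner's coins and the revealed secrets alone. Hence, conditional on that information, each unrevealed secret $h_j$ remains uniform on $C_j$ and independent of it. Each of the first $k$ proposals contains at most $k$ candidates, so it hits a given unrevealed secret with probability at most $k/M$ in total. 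A union bound over the at most $k$ rounds then gives $\Prb[G]\le k^2/M$. Combining this with Step 3 yields $\E[\tau]\ge k(1-k^2/M)$, averaged over the profile as well as the learner's coins. This completes the plan.
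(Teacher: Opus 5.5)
Your proof is correct and follows essentially the same route as the paper. Both hide one uniformly random candidate in each of $k$ blocks of size $M$, couple the real run with the run under prescribed responses (the paper's $\widehat W_t$ plays the role of your simulated oracle), and use that each proposal meets a not-yet-returned hidden candidate with probability at most $k/M$, then average over the hidden profile. The differences are cosmetic: the paper lets every voter approve the common hidden committee $K$, returns level-$k$ violations $(k,x_t,N)$, and bounds survival by the product $(1-k/M)^k$, whereas your disjoint singleton ballots give level-$1$ violations $(1,h_j,\{j\})$ and a union bound $1-k^2/M$; both tend to $1$ as $M\to\infty$.
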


\begin{proof}
Fix an integer $M>k$, let $n=k$, and partition the candidate set into
pairwise disjoint blocks
$$
C=C_1\mathbin{\dot\cup}\cdots\mathbin{\dot\cup}C_k,
\qquad
\card{C_t}=M
\quad\text{for every }t\in[k].
$$
Fix an arbitrary randomized full-witness learner $\mathcal L$ for these
parameters.

Choose candidates $x_t\in C_t$ independently and uniformly, and set
$K:=\{x_1,\ldots,x_k\}$. For each realization
$x=(x_1,\ldots,x_k)$, define the approval profile $A^x$ by
$A_i^x:=K$ for every $i\in N$. Thus, every voter approves exactly the
same hidden committee $K$.

Fix an ordering of $C$. For each realization $x$, define an oracle
$\mathcal O_x$ as follows. In round $t\le k$, against a committee $W$,
the oracle returns $(k,x_t,N)$ if $x_t\notin W$. If $x_t\in W$, it
accepts when $W=K$ and otherwise returns $(k,c,N)$, where $c$ is the
first candidate in $K\setminus W$. After round $k$, it accepts $K$ and,
against every other committee $W$, returns $(k,c,N)$ for the first
candidate $c\in K\setminus W$.

The oracle is truthful for the fixed profile $A^x$. Indeed, if
$W\ne K$, then $K\setminus W\ne\varnothing$, and every
$c\in K\setminus W$ is approved by all voters, while
$\card{A_i^x\cap W}=\card{K\cap W}<k$ for every $i\in N$. Since $n=k$, the triple $(k,c,N)$ is an EJR+ violation of $W$. It is
also a PJR+ violation, because
$\card{W\cap\bigcup_{i\in N}A_i^x}=\card{W\cap K}<k$. The committee $K$
satisfies both axioms, because no candidate outside $K$ is approved by any
voter.

For each $t\in[k]$, let $\widehat W_t$ be the random committee that
$\mathcal L$ would propose after receiving the prescribed responses
$$
(k,x_1,N),\ldots,(k,x_{t-1},N).
$$
The committee $\widehat W_t$ depends on $x_1,\ldots,x_{t-1}$ and the
learner's randomness, but not on $x_t$. Define $E_0$ to be the certain
event and, for $t\in[k]$, let
$$
E_t:=\bigcap_{s=1}^t\{x_s\notin\widehat W_s\}.
$$
On $E_t$, an induction on $s$ shows that the actual first $t$ proposals
are $\widehat W_1,\ldots,\widehat W_t$ and the oracle returns
$(k,x_s,N)$ in every round $s\le t$. In particular, the first $t$
proposals are rejected.

The event $E_{t-1}$ and the committee $\widehat W_t$ are determined
without reference to $x_t$. Since $x_t$ is uniform on $C_t$ and
independent of all preceding hidden candidates and of the learner's
randomness,
$$
\begin{aligned}
\Prb(E_t\mid E_{t-1})
&=1-\E\left[
\frac{\card{\widehat W_t\cap C_t}}{M}
\,\middle|\,
E_{t-1}
\right]\\
&\ge 1-\frac{k}{M}.
\end{aligned}
$$
The events are nested, and therefore
$$
\Prb(E_k)
=\prod_{t=1}^k\Prb(E_t\mid E_{t-1})
\ge\left(1-\frac{k}{M}\right)^k.
$$

Let $\tau$ denote the number of rejected proposals. On $E_k$, the first
$k$ proposals are rejected, so
$$
\E_{x,\mathcal L}\left[
\tau(\mathcal L,A^x,\mathcal O_x)
\right]
\ge k\left(1-\frac{k}{M}\right)^k.
$$
The expectation on the left is the average, over all realizations $x$,
of the learner's expected number of rejections against the fixed pair
$(A^x,\mathcal O_x)$. Hence some realization $x^\star$ satisfies
$$
\E_{\mathcal L}\left[
\tau(\mathcal L,A^{x^\star},\mathcal O_{x^\star})
\right]
\ge k\left(1-\frac{k}{M}\right)^k.
$$

Since $\mathcal L$ was arbitrary, the randomized rejection complexity
for either target axiom, with $n=k$ and $m=kM$, is at least the right-hand
side. Taking the supremum over $M>k$ gives, for
$\mathsf X\in\{\PJR,\EJR\}$,
$$
\widetilde Q_{\mathrm{full}}^{\mathsf X}(k)
\ge
\sup_{M>k}k\left(1-\frac{k}{M}\right)^k
=k.
$$
\end{proof}

\subsection{The Single-Path Barrier}
\label{subsec:single-swap-lower-bound}

Fix an initial committee $W_0$. An \emph{insertion--deletion learner} always
responds to a returned candidate $c$ by proposing $W-d+c$ for some
$d\in W$; the deletion may be randomized.

\begin{theorem}
\label{thm:insertion-deletion-ejr-plus-lower-bound}
For sufficiently many candidates, there are a fixed profile and a
candidate-only oracle that force every insertion--deletion learner from $W_0$
to incur at least $B_k$ rejections, where
$B_k=\sum_{\ell=1}^k q_\ell=\frac23k^{3/2}+O(k)$. In particular, the
number of rejections is at least $\frac23k^{3/2}-O(k)$.
\end{theorem}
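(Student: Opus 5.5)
The plan is to reuse the admissible-sequence argument of \cref{thm:admissible-sequence-full-witness-lower-bound}, applied to the explicit sequence of \cref{thm:explicit-admissible-sequence-lower-bound}. What changes is that the profile must be fixed in advance, and for all insertion--deletion learners at once. The structural fact to exploit is this: an insertion--deletion learner only ever adds returned candidates. So along every realized path, the committee $W_t$ proposed in round $t$ is contained in $W_0\cup\{c_1,\ldots,c_{t-1}\}$, where $c_s$ is the candidate returned in round $s$. Because of this, the profile only needs to contain, ahead of time, enough ``fresh'' candidates for every group the adversary might want to use. The oracle can then pick among them adaptively, and the profile itself stays independent of the learner.

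Concretely, I will take $n=k$ and let the candidates of $W_0$ be approved by nobody. Let $\lambda_1,\ldots,\lambda_{B_k}$ be the nondecreasing sequence with $q_\ell$ copies of each level $\ell$. For every level $\ell\le r$ and every $\ell$-subset $G\subseteq N$, I include $B_k$ distinct candidates $c_{G,1},\ldots,c_{G,B_k}$, each approved by exactly the voters in $G$. This needs at most $k+B_k\sum_\ell\binom{k}{\ell}$ candidates, which is the ``sufficiently many'' in the statement. The oracle works as follows in round $t\le B_k$, facing committee $W_t$:
\begin{itemize}
\item it computes $d_i=\card{A_i\cap W_t}$ for each voter $i$;
\item it selects a set $S_t$ of $\lambda_t$ voters with $d_i<\lambda_t$;
\item it returns a copy $c_{S_t,j}$ that has not been returned before.
\end{itemize}
An unused copy always exists, since at most $B_k$ rounds occur. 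Such a copy lies outside $W_t$, because only previously returned candidates and unapproved members of $W_0$ can be in $W_t$. After round $B_k$ (or off this protocol), the oracle accepts EJR+ committees and otherwise returns an arbitrary EJR+ violation, so it is truthful for the fixed profile.

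The key step is to show that the set $S_t$ always exists, on every path and for every choice of deletions. Since $W_0$ is unapproved and every other member of $W_t$ is some earlier $c_s$ approved by exactly $\lambda_s$ voters, we get
\[
\sum_i d_i=\sum_{s<t,\ c_s\in W_t}\lambda_s .
\]
There are at most $k$ such $s$, and the sequence is nondecreasing, so this sum is at most the sum of the $k$ largest preceding levels. This is precisely the window sum bounded in the proof of \cref{thm:explicit-admissible-sequence-lower-bound}, which gives at most $\lambda_t(k-\lambda_t+1)-1$. \Cref{lem:low-satisfaction-count} then yields at least $\lambda_t$ voters with $d_i<\lambda_t$.

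Then $(\lambda_t,c_t,S_t)$ is an EJR+ violation, because $\card{S_t}=\lambda_t=\lambda_t n/k$, every voter in $S_t$ approves $c_t\notin W_t$, and each such voter has $d_i<\lambda_t$. The candidate-only oracle reveals only $c_t$, which is therefore a truthful response. Hence the first $B_k$ proposals are rejected on every realization of the learner's (possibly randomized) deletions, giving the bound $B_k=\frac23k^{3/2}+O(k)$.

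The main obstacle is making sure the profile does not depend on the learner's deletions. That is why I predefine fresh copies for all $\ell$-subsets rather than constructing candidates online as in \cref{thm:admissible-sequence-full-witness-lower-bound}. I also need to check that unreturned candidates never enter a committee; this holds because an insertion--deletion learner inserts only returned candidates.
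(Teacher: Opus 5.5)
Your proof is correct, but it takes a different route from the paper's.

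\textbf{The paper's route.} The paper builds the complete $k$-ary tree of depth $T=B_k$ over all possible deletion choices. Each internal node $v$ gets its own candidate $c_v$, approved by a group $S_v$ that is selected at that node via \cref{lem:low-satisfaction-count}. Each child is $W_v-d+c_v$ for one $d\in W_v$. The fixed profile is the union of all node candidates. Truthfulness at $v$ holds because $W_v$ contains only root candidates and candidates of ancestors of $v$, never candidates from other branches.

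\textbf{Your route.} You instead precommit a pool of interchangeable candidates for every level $\ell\le r$ and every $\ell$-subset of voters, each copy approved by exactly that subset. A single oracle then draws an unused copy adaptively. The insertion--deletion hypothesis is what makes this legitimate: every proposed committee consists only of unapproved members of $W_0$ and previously returned, hence distinct, candidates. So $\card{A_i\cap W_t}$ can be evaluated directly under the fixed profile, and the admissibility window bound applies verbatim. Copies may be shared across different branches without any harm.

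\textbf{What each approach buys.} Yours needs a far smaller candidate set: $k+B_k\sum_{\ell\le r}\binom{k}{\ell}=2^{O(\sqrt{k}\log k)}$, and $q_\ell$ copies per $\ell$-subset would already suffice. The tree needs $k+\sum_{j<B_k}k^j=k^{\Theta(k^{3/2})}$. In exchange, the paper's tree needs no freshness bookkeeping, since each node's candidate is unique by construction, and it makes the ``one adversary against every deletion path'' structure explicit. Both arguments extend unchanged to any admissible sequence, such as the greedy one from the appendix.
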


\begin{proof}
Let $\lambda_1,\ldots,\lambda_T$ be the explicit admissible sequence from
\cref{thm:explicit-admissible-sequence-lower-bound}, so $T=B_k$.
Build a complete rooted $k$-ary tree of depth $T$ and label its root by
$W_0$. Consider a node $v$ at depth $t<T$. Its committee $W_v$ consists
only of root candidates and candidates returned at ancestors of $v$.
For each voter $i$, let $d_i(v)$ be the number of ancestor candidates in
$W_v$ approved by $i$. Since $W_v$ contains at most $k$ ancestor
candidates, whose support sizes are drawn from
$\lambda_1,\ldots,\lambda_t$, admissibility gives
$$
\sum_{i=1}^k d_i(v)
\le\sum_{s=\max\{1,t-k+1\}}^t\lambda_s
\le\lambda_{t+1}(k-\lambda_{t+1}+1)-1.
$$
By \cref{lem:low-satisfaction-count}, choose a set $S_v$ of
$\lambda_{t+1}$ voters with $d_i(v)<\lambda_{t+1}$. Introduce a
candidate $c_v$, unique to $v$, approved precisely by $S_v$; call $c_v$ the
node candidate at $v$. For each
$d\in W_v$, create a child labelled $W_v-d+c_v$.

Define one fixed profile in which every node candidate $c_v$ is approved
precisely by $S_v$, and let the root candidates and all remaining candidates
be approved by no voter.
At node $v$, the committee contains only candidates associated with
ancestors of $v$; it contains no descendant candidate and no candidate
from another branch. Hence each $i\in S_v$ approves exactly $d_i(v)$
members of $W_v$, while all voters in $S_v$ approve
$c_v\notin W_v$. Since $\card{S_v}=\lambda_{t+1}=\lambda_{t+1}n/k$,
$(\lambda_{t+1},c_v,S_v)$ is an EJR+ violation of $W_v$.

The oracle uses the realized history to identify the current node and
returns its designated candidate. Whatever committee member the learner
deletes, the next committee is the corresponding child. Every
root-to-leaf interaction therefore contains $T$ rejected proposals.
The stated bound on $m$ supplies the $k$ root candidates and one distinct
candidate for every internal node. On any other history, the oracle accepts
an EJR+ committee and otherwise returns the candidate component of an
arbitrary EJR+ violation. Thus the oracle is defined on every history.
\end{proof}

\subsection{Exhaustive Branching}
\label{subsec:deterministic-candidate-only-branching}

An unrestricted deterministic learner can avoid committing to one deletion
by exploring every branch. Because the PAV score is hidden, it cannot select a
good child by conditional expectations \citep[Chapter~5]{MotwaniRaghavan1995}.
The next lemma shows that average potential growth nevertheless forces some
branch to terminate.

\begin{lemma}
\label{lem:average-drift-branching}
Let $\mathcal X$ be a proposal space. Suppose that a rejected proposal
$x$ and response $r$ determine successors
$T_1(x,r),\ldots,T_b(x,r)$, and that a hidden potential
$\Psi:\mathcal X\to[0,B]$ satisfies
$$
\frac1b\sum_{j=1}^b\Psi(T_j(x,r))\ge\Psi(x)+\delta
$$
for every rejected proposal and response. Then querying every
node of the resulting proposal tree through depth
$\lfloor B/\delta\rfloor+1$ must encounter an accepted proposal, even
when the responses are chosen adaptively.
\end{lemma}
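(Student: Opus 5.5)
We argue by contradiction, using the averaging hypothesis to build a path along which $\Psi$ grows too fast. Set $D:=\lfloor B/\delta\rfloor+1$. We describe the interaction as a tree. The root is the initial proposal. Each node is a proposal $x$, and if it is rejected with response $r$, its children are $T_1(x,r),\ldots,T_b(x,r)$. The responses may be chosen adaptively, but once the interaction is fixed, each queried node receives one definite response. So the tree through depth $D$ is a well-defined finite object, and it suffices to argue about it.

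Suppose, for contradiction, that every node at depths $0,1,\ldots,D$ is rejected. Then every node through depth $D-1$ has its full set of $b$ children, and every node at depth $D$ is also rejected. For $j=0,\ldots,D$, let $\mu_j$ be the average of $\Psi$ over the $b^j$ nodes at depth $j$. Equivalently, $\mu_j$ is the expectation of $\Psi$ at the endpoint of a uniformly random root-to-depth-$j$ path, where each step chooses a child uniformly.

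Every node at depth $j<D$ is rejected, so the hypothesis applies to it with its realized response. Averaging the child values uniformly gives $\mu_{j+1}\ge\mu_j+\delta$. By induction,
\[
\mu_D\ge\Psi(\text{root})+D\delta\ge D\delta>B,
\]
because $\Psi\ge0$ and $D>B/\delta$. But $\Psi\le B$ everywhere, so $\mu_D\le B$. This is a contradiction, and some queried node through depth $D$ must be accepted.

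The averaging step needs one point made explicit: the hypothesis must hold for whatever response the adversary picks at each node. Since it holds for every rejected proposal and every response, adaptivity causes no difficulty. A second point concerns repeated proposals. The same proposal may appear at several nodes and receive different responses there. Each occurrence is treated as a separate node with its own response, and the hypothesis still applies to it, so this is harmless.

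With this in hand, the contradiction is already reached at depth $D$, one level earlier than the statement's depth $D+1$. So the bound as stated holds with a margin.
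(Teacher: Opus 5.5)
Your proof is correct and is essentially the paper's argument: the paper also follows a uniformly random root-to-leaf path (your level averages $\mu_j$ are exactly its expectations), gains $\delta$ per rejected level, and contradicts $\Psi\le B$ at depth $\lfloor B/\delta\rfloor+1$. One slip in your closing remark: the statement's depth is $\lfloor B/\delta\rfloor+1=D$, not $D+1$. The genuine margin is that your argument only uses rejections at depths $0,\ldots,D-1$, so querying through depth $\lfloor B/\delta\rfloor$ already suffices.
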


\begin{proof}
Suppose every node through depth $L:=\lfloor B/\delta\rfloor+1$ is rejected.
After the adaptive response at each node is fixed, choose a random path by
selecting one child uniformly at every step. Conditional on the current node,
the expected potential increases by at least $\delta$, so
$\E[\Psi(X_L)]\ge\Psi(X_0)+L\delta>B$, contradicting $\Psi\le B$.
\end{proof}

\begin{theorem}
\label{thm:deterministic-candidate-only-ejr-plus-upper-bound}
For every $k\ge1$,
$Q_{\mathrm{cand}}^{\EJR}(k)\le 2^{O(k^2(\log k)^2)}$.
\end{theorem}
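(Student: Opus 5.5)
We will apply \cref{lem:average-drift-branching} with proposal space $\mathcal X=\binom{C}{k}$, branching factor $b=k$, and hidden potential $\Psi:=\Phi_A$. The successors of a rejected committee $W$ with returned candidate $c$ are the $k$ committees $W-d+c$ for $d\in W$. By \cref{lem:ejr-plus-average-swap-gain}, every EJR+ violation $(\ell,c,S)$ of $W$ satisfies
\[
\frac1k\sum_{d\in W}\Phi_A(W-d+c)\ge\Phi_A(W)+\frac{n}{k^2}.
\]
Under candidate-only feedback the learner sees only $c$, but some violation with that candidate exists, so the inequality still holds. Thus $\delta=n/k^2$. Since $0\le\Phi_A\le nH_k$, we may take $B=nH_k$.

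The lemma therefore says that every tree of depth $L:=\lfloor k^2H_k\rfloor+1$ grown this way contains an accepted node, and $L$ does not depend on $n$. The deterministic learner fixes an initial committee $W_0$ and explores this tree, say in breadth-first order, querying each node as it is generated. Whenever a node $W$ is rejected with response $c$, it enqueues the $k$ children $W-d+c$. The learner stops at the first acceptance.

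Adaptivity causes no difficulty. The oracle's response at a node may depend on the entire history, including other branches. The lemma fixes responses node by node, and its random-path argument only uses the response at each node. Hence the conclusion holds for every realized assignment of responses.

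Next we count rejections. The tree through depth $L$ has at most
\[
\sum_{j=0}^{L}k^j\le k^{L+1}
\]
nodes, and each is queried at most once before acceptance. So the number of rejections is at most $k^{k^2H_k+2}=2^{O(k^2H_k\log k)}=2^{O(k^2(\log k)^2)}$, using $H_k=O(\log k)$.

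The main point needing care is the formal meaning of ``querying the tree.'' Nodes may repeat the same committee, and each query is a separate round whose response depends on history, so a repeated committee may get a different candidate. I would handle this by treating every tree node as its own round. Its children are defined from the response received in that round. The lemma's hypothesis is checked per node, because each response is some EJR+ violation's candidate for that node's committee under the fixed profile $A$. Duplicates only waste queries, and the node count already absorbs them.

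One more check: if the search reached depth $L$ with every node rejected, we would contradict the lemma. So an accepted proposal appears within the stated number of rejections, uniformly in $n$ and $m$, and taking the supremum in \cref{def:deterministic-rejection-complexity} gives the bound.
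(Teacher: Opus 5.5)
This is essentially the paper's proof. It uses the same breadth-first exploration of the $k$-ary deletion tree, the same application of \cref{lem:average-drift-branching} with $b=k$, $B=nH_k$, $\delta=n/k^2$ (justified by \cref{lem:ejr-plus-average-swap-gain}), and the same $k^{L+1}$ node count.

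The only slip is $k=1$. There $\sum_{j=0}^{L}k^j=L+1$ exceeds $k^{L+1}=1$, and the target $2^{O(k^2(\log k)^2)}$ degenerates to $1$. The paper handles this case separately: every witness group must then be all of $N$, so the child $\{c\}$ is accepted after a single rejection.
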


\begin{proof}
For $k\ge2$, set $L:=\lfloor k^2H_k\rfloor+1$. Query an arbitrary root
committee. Whenever $W$ is rejected and $c$ is returned, create the $k$
children $W-d+c$, one for each $d\in W$, and query the resulting tree
through depth $L$ in breadth-first order, stopping at the first accepted node.

Apply \cref{lem:average-drift-branching} to the PAV score of the hidden profile,
with $b=k$, $B=nH_k$, and $\delta=n/k^2$. Since
$L=\lfloor B/\delta\rfloor+1$, some node through depth $L$ is accepted.
The tree has at most
$$
\sum_{j=0}^{L}k^j
\le k^{L+1}
$$
nodes. Since $H_k=O(\log k)$, we have
$L=O(k^2\log k)$, and hence the number of rejected proposals is
$2^{O(k^2(\log k)^2)}$.

If $k=1$, any returned candidate is approved by every voter, so the
singleton committee containing it is accepted after at most one
rejection.
\end{proof}

\section{Discussion}
\label{sec:discussion}

Taken together, our results reveal a hierarchy in the progress available from
violation feedback: permanent retention for PJR+, observable potential progress
for EJR+ with full witnesses, and only average potential progress under
candidate-only feedback.

Several quantitative questions remain open. With full witnesses, deterministic
EJR+ complexity lies between $\Omega(k^{3/2})$ and $O(k^2\log k)$;
with candidate-only feedback, the same lower bound contrasts with an
exponential upper bound, and it is unknown whether a polynomial bound exists.
Likewise, the single-path lower bound does not apply to learners that retain
multiple deletion branches. For EJR+, we conjecture that even a randomized
learner can be forced to make more than a linear number of proposals.

\begin{conjecture}
\label{conj:randomized-ejr-superlinear}
$\widetilde Q_{\mathrm{full}}^{\EJR}(k)=\omega(k)$.
\end{conjecture}

\section*{Disclosure of AI Use}
All proofs were generated and written by the human author. The author used an LLM for editorial assistance. The author takes full responsibility for the mathematical content and the final text.

\bibliographystyle{plainnat}
\bibliography{references}

\appendix

\section{Optimal admissible sequences and the refined lower-bound constant}
\label{sec:optimal-admissible-sequences}

This appendix sharpens the construction based on candidate support sizes from
\cref{sec:level-sequence-lower-bound} in two ways. A greedy recurrence
computes the maximum possible length $L_k$ of an admissible sequence. A
continuum analysis then gives
$L_k\ge(\gamma-o(1))k^{3/2}$ for an explicit constant
$\gamma\approx1.1496$.

\subsection{The optimal greedy recurrence}
\label{subsec:greedy-level-recurrence}

By \cref{thm:admissible-sequence-full-witness-lower-bound}, obtaining
the strongest lower bound from this construction amounts to finding
the longest admissible sequence. At any point, only the $k$ largest
preceding levels matter, because they determine the largest possible value of
$\sum_i\card{A_i\cap W}$ for a proposed committee $W$. We therefore
represent the preceding levels by a nondecreasing state
$x=(x_1,\ldots,x_k)$ containing their $k$ largest values, with zeros
added initially.

Every admissible next level yields one additional rejection. Choosing
a smaller level is preferable because it gives the same immediate
benefit while adding less support to the state. This leads to the following
recurrence.

\begin{definition}
\label{def:greedy-level-recurrence}
Let $x=(0,\ldots,0)$. A level $\ell\in[k]$ is admissible at the state
$x=(x_1,\ldots,x_k)$ if
$$
\ell\ge x_k
\qquad\text{and}\qquad
\sum_{j=1}^k x_j\le\ell(k-\ell+1)-1.
$$
If such a level exists, append the smallest one and replace $x$ by the
sorted vector of the $k$ largest entries of
$x_1,\ldots,x_k,\ell$. Let $L_k$ be the number of levels appended
before no admissible level remains.
\end{definition}

The state always consists of the $k$ largest levels appended so far.
The condition $\ell\ge x_k$ keeps the sequence nondecreasing, while
the second condition is exactly the admissibility inequality.

\begin{proposition}
\label{prop:greedy-level-recurrence-optimal}
The recurrence in \cref{def:greedy-level-recurrence} produces an
admissible sequence of maximum possible length.
\end{proposition}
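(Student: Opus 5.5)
We prove optimality by an exchange argument that compares states under a dominance order. For nondecreasing vectors $x,y\in\{0,\ldots,k\}^k$, say $x\preceq y$ if $x_j\le y_j$ for every $j$. The admissibility conditions are monotone in this order. The condition $\ell\ge x_k$ and the condition $\sum_j x_j\le \ell(k-\ell+1)-1$ are both easier to satisfy when the state is smaller. Consequently, every level admissible at $y$ is also admissible at any $x\preceq y$.

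The first step is a monotonicity lemma for the state update. For a state $x$ and a level $\ell\ge x_k$, let $\mathrm{up}(x,\ell)$ be the sorted vector of the $k$ largest entries of $x_1,\ldots,x_k,\ell$. Since $\ell\ge x_k$, this simply drops $x_1$ and appends $\ell$, so $\mathrm{up}(x,\ell)=(x_2,\ldots,x_k,\ell)$. From this formula we get two facts.
\begin{itemize}
\item[(i)] If $x\preceq y$ and $\ell\le\ell'$, then $\mathrm{up}(x,\ell)\preceq\mathrm{up}(y,\ell')$.
\item[(ii)] The state after any prefix of an admissible sequence is exactly the vector of its $k$ largest entries, with zeros padded.
\end{itemize}
These facts identify admissibility of a sequence, as in \cref{def:admissible-level-sequence}, with step-by-step admissibility in the recurrence. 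The window sum over the preceding $\min\{t-1,k\}$ terms of a nondecreasing sequence equals the sum of the state entries.

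The second step is an induction on $t$. Let $\mu_1,\ldots,\mu_T$ be any admissible sequence, with states $y^{(t)}$. Let $\lambda_1,\lambda_2,\ldots$ be the greedy sequence, with states $x^{(t)}$. We claim that for every $t\le T$ the greedy sequence has at least $t$ terms, $\lambda_t\le\mu_t$, and $x^{(t)}\preceq y^{(t)}$. The base case is trivial, since both states start at zero.

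For the inductive step, $\mu_t$ is admissible at $y^{(t-1)}$, so it is admissible at $x^{(t-1)}\preceq y^{(t-1)}$ by the monotonicity above. Hence a smallest admissible level $\lambda_t$ exists and satisfies $\lambda_t\le\mu_t$. Fact (i) then gives $x^{(t)}\preceq y^{(t)}$. Concluding $L_k\ge T$ finishes the argument. Together with the observation that the greedy sequence is itself admissible, this shows that $L_k$ is the maximum length.

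The main obstacle is stating the correspondence between sequence admissibility and the recurrence's state test correctly. In particular, we must confirm that the window $\sum_{s=\max\{1,t-k\}}^{t-1}\lambda_s$ coincides with the sum of the $k$ largest preceding entries. This holds because the sequence is nondecreasing. The dominance monotonicity of $\mathrm{up}$ is then a routine coordinatewise check, using the shift formula valid when $\ell\ge x_k$.
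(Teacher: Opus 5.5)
Your proposal is correct and uses the same ingredients as the paper: the componentwise dominance order, monotonicity of both admissibility and the state update, and the fact that the smallest admissible level gives a dominated successor state. You organize these as an explicit step-by-step coupling with an arbitrary admissible sequence, where the paper repeats an exchange argument at each state; your version makes that induction explicit. The one step the paper includes and you omit is the termination check. Each level can occur at most $k$ times, because $k\ell>\ell(k-\ell+1)-1$, so admissible sequences have length at most $k^2$; this is what makes the recurrence terminate, $L_k$ finite, and a maximum exist.
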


\begin{proof}
For a state $x$ and a level $\ell$ admissible at $x$, let
$\sigma_\ell(x)$ denote the successor state obtained by appending
$\ell$. A fixed level $\ell$ can appear at most $k$ times: before a
$(k+1)$st copy, $\sum_{j=1}^k x_j\ge k\ell$, whereas
admissibility would require
$k\ell\le\ell(k-\ell+1)-1<k\ell$. Since the levels are nondecreasing
and belong to $[k]$, every admissible sequence has length at most
$k^2$; in particular, the recurrence terminates.

For states $x$ and $y$, write $x\le y$ if $x_j\le y_j$ for every
$j\in[k]$. Suppose $x\le y$ and a level $\ell$ is
admissible at $y$. Since $x_k\le y_k\le\ell$ and
$$
\sum_{j=1}^k x_j
\le \sum_{j=1}^k y_j
\le \ell(k-\ell+1)-1,
$$
the same level is admissible at $x$. Taking the $k$ largest entries
preserves componentwise order, so
$\sigma_\ell(x)\le\sigma_\ell(y)$. Repeating this argument after each appended level shows that every
sequence of subsequent levels admissible from $y$ is also admissible from $x$.

Now fix a state $x$, let $a$ be its smallest admissible level, and let
$b$ be any other admissible level. Since $a\le b$, we have
$\sigma_a(x)\le\sigma_b(x)$. The preceding monotonicity argument then shows that every
sequence of subsequent levels admissible after choosing $b$ is also admissible after
choosing $a$. Since either choice appends one level,
choosing $a$ permits a sequence at least as long as choosing $b$.
Applying this argument at every state proves the proposition.
\end{proof}

Combining
\cref{thm:admissible-sequence-full-witness-lower-bound,prop:greedy-level-recurrence-optimal}
gives
$$
Q_{\mathrm{full}}^{\EJR}(k)\ge L_k
\qquad\text{and}\qquad
Q_{\mathrm{cand}}^{\EJR}(k)\ge L_k.
$$

\subsection{A refined asymptotic lower bound}
\label{subsec:asymptotic-level-sequence-lower-bound}

The explicit sequence in
\cref{thm:explicit-admissible-sequence-lower-bound} gives the constant
$2/3$. The following scaling calculation motivates a longer sequence,
which we then verify rigorously. The relevant levels are of order
$\sqrt{k}$, and each such level may occur $\Theta(k)$ times. To estimate
the admissible number of copies,
consider a proposed next level $\ell$ and let
$a_1,\ldots,a_k$ be the $k$ largest preceding levels, with zeros added
if necessary. Define their total deficit from $\ell$ by
$D:=\sum_{j=1}^k(\ell-a_j)$. Since
$\sum_j a_j=k\ell-D$, the admissibility condition becomes
\begin{equation}
D\ge\ell(\ell-1)+1.
\label{eq:deficit-admissibility}
\end{equation}
For $\ell=y\sqrt{k}$, the required normalized deficit is therefore
asymptotically $y^2$.

Fix $x\in(0,1]$, and suppose that each recent level occurs
approximately $xk$ times. Let $p$ be the integer satisfying
$x\in(1/(p+1),1/p]$. The $k$ largest preceding levels then consist of
$p$ complete blocks at deficits $0,1,\ldots,p-1$, followed by the
remaining $(1-px)k$ entries at deficit $p$. Their normalized total
deficit is
$$
F(x):=x\sum_{r=0}^{p-1}r+p(1-px)
=p-\frac{p(p+1)}2x.
$$
The adjacent formulas agree at their endpoints, so
$F:(0,1]\to[0,\infty)$ is continuous and strictly decreasing.

The block fraction at the scaled level $y$ for which the normalized deficit
equals $y^2$ is the value $h(y)$ satisfying $F(h(y))=y^2$. Equivalently, whenever
$(p-1)/2\le y^2\le p/2$,
$$
h(y):=\frac{2(p-y^2)}{p(p+1)}.
$$
These branches meet at their endpoints. On every bounded interval, only
finitely many branches occur and their derivatives are bounded, so $h$ is
Lipschitz. A level near $y\sqrt{k}$ can therefore occur approximately
$kh(y)$ times. Since a scaled interval of width $dy$ contains
approximately $\sqrt{k}\,dy$ integer levels, the predicted total
sequence length is
$$
k^{3/2}\int_0^\infty h(y)\,dy.
$$
Define
$$
\gamma:=\int_0^\infty h(y)\,dy
=\sum_{p=1}^\infty\frac{2}{p(p+1)}
\left[p(b_p-a_p)-\frac{b_p^3-a_p^3}{3}\right],
$$
where $a_p:=\sqrt{(p-1)/2}$ and $b_p:=\sqrt{p/2}$. The summand is
$\Theta(p^{-3/2})$, so the series converges, and numerical evaluation gives
$\gamma=1.1496\ldots$. We use this exact definition of $\gamma$ below.

\begin{theorem}
\label{thm:asymptotic-greedy-level-sequence}
The greedy sequence satisfies
$$
\liminf_{k\to\infty}\frac{L_k}{k^{3/2}}\ge\gamma.
$$
\end{theorem}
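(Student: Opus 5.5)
Since the greedy recurrence is optimal (\cref{prop:greedy-level-recurrence-optimal}), it suffices to exhibit, for each $\varepsilon>0$ and all large $k$, an explicit admissible sequence of length at least $(\gamma-\varepsilon)k^{3/2}$. I will use the continuum profile $h$ directly. Fix a large cutoff $Y$ with $\int_Y^\infty h<\varepsilon/3$ and a small slack $\eta>0$. The sequence is nondecreasing and uses levels $\ell=1,2,\ldots,\lfloor Y\sqrt k\rfloor$. Level $\ell$ appears $n_\ell:=\max\{0,\lfloor k\,(h(\ell/\sqrt k)-\eta)\rfloor\}$ times. Then the length is $\sum_\ell n_\ell = k^{3/2}\int_0^Y h(y)\,dy - O(\eta Y k^{3/2}) - O(k)$. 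This holds because $h$ is Lipschitz on $[0,Y]$, so the Riemann sum error is $O(Y k)$. Choosing $\eta$ small relative to $\varepsilon/Y$ gives the bound.

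The main step is checking admissibility in the deficit form \eqref{eq:deficit-admissibility}. Consider the worst copy of level $\ell$, namely the last one, and write $y=\ell/\sqrt k$. The $k$ largest preceding entries consist of at most $n_\ell-1$ copies at deficit $0$, then $n_{\ell-1}$ copies at deficit $1$, $n_{\ell-2}$ at deficit $2$, and so on until $k$ entries are filled, padded with zeros if needed. Padding only increases the deficit. Since $h$ is Lipschitz and the relevant deficits $r$ are at most $p=O(Y^2)$, which is bounded, we have $n_{\ell-r}\le k(h(y)-\eta)+O(Yk^{1/2}\cdot r)\le k(h(y)-\eta/2)$ for large $k$.

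Define $G(x)$ as the total deficit when each earlier level contributes at most $xk$ entries. Then $G(x)=kF(x)$ up to rounding of $O(p)$. Because the deficit is minimized when every block is as large as possible, the actual total deficit is at least $kF(h(y)-\eta/2)$. Since $F$ is strictly decreasing, and in fact has slope at most $-1$ on each piece where $p\ge1$, we get $F(h(y)-\eta/2)\ge F(h(y))+\eta/2=y^2+\eta/2$. Therefore $D\ge k y^2+\eta k/2-O(1)=\ell^2+\eta k/2-O(1)\ge \ell(\ell-1)+1$ for large $k$.

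I expect the main obstacle to be the boundary regimes. The first is small $y$, where $h$ is near $1$ but $F$ has $p=1$ and slope $-1$; this still works. A separate check is needed that the early levels, where $n_\ell$ is near $k$, satisfy the inequality with zero padding. The second regime is levels where $n_\ell=0$ gets skipped; there are none below $Y$, since $h>0$ everywhere. The last requirement is to make the Lipschitz comparison $n_{\ell-r}$ versus $n_\ell$ uniform over the $O(p)$ blocks involved, since the cutoff $Y$ keeps $p$ bounded.

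Once admissibility is established, \cref{thm:admissible-sequence-full-witness-lower-bound} and \cref{prop:greedy-level-recurrence-optimal} give $L_k\ge(\gamma-\varepsilon)k^{3/2}$ for all large $k$. Letting $\varepsilon\to0$ then yields the stated $\liminf$.
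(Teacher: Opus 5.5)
Your proposal is correct and follows essentially the same route as the paper. Both arguments take blocks of about $k\,h(\ell/\sqrt k)$ copies of each level $\ell\le Y\sqrt k$, with a little slack, check admissibility at the last copy of each block through the deficit form and a comparison with $F$, and finish with a Riemann sum plus optimality of the greedy recurrence. The differences are minor:
\begin{itemize}
\item You use an additive slack $\eta$, which must also satisfy $\eta<\min_{[0,Y]}h$. The paper uses the multiplicative factor $(1-\varepsilon)$.
\item You get the margin $\eta/2$ directly from the fact that every linear piece of $F$ has slope $-p(p+1)/2\le-1$. The paper obtains its $\delta$ by compactness.
\item You bound the deficit using the monotonicity of the deficit in the block sizes. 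The paper splits into the cases of small and moderate $y$.
\end{itemize}
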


\begin{proof}
Fix $\varepsilon\in(0,1/2)$ and $Y>0$. For every
$1\le\ell\le\lfloor Y\sqrt{k}\rfloor$, take
$$
q_{\ell,k}:=
\left\lfloor
(1-\varepsilon)k
h\left(\frac{\ell}{\sqrt{k}}\right)
\right\rfloor
$$
copies of level $\ell$, and concatenate these blocks in increasing
order. It remains to prove that this sequence is admissible for all sufficiently
large $k$.

Consider the last copy of level $\ell$, put
$y:=\ell/\sqrt{k}$, and let $D_{\ell,k}$ be the total deficit from
$\ell$ of the $k$ largest preceding levels. By
\cref{eq:deficit-admissibility}, it suffices to prove
$D_{\ell,k}\ge\ell(\ell-1)+1$.

Write $x_y:=(1-\varepsilon)h(y)$. We claim that, uniformly for
$\ell\le Y\sqrt{k}$,
\begin{equation}
\frac{D_{\ell,k}}k
\ge F(x_y)-O(k^{-1/2}).
\label{eq:continuum-deficit-approximation}
\end{equation}
Choose $\eta>0$ such that $x_y>1/2$ for $0\le y\le\eta$. Before the
last copy of a level $\ell\le\eta\sqrt{k}$, at most $q_{\ell,k}-1$
entries have deficit zero, while every remaining entry has deficit at
least one. Hence
$$
D_{\ell,k}\ge k-q_{\ell,k}
=k(1-x_y)+O(1)
=kF(x_y)+O(1),
$$
where the final equality uses the first branch of $F$.

Now suppose $\eta\sqrt{k}\le\ell\le Y\sqrt{k}$. Since $h$ is positive
and Lipschitz on $[0,Y]$, there is a constant $x_0>0$ such that
$x_y\ge x_0$, and for every fixed $r$,
$q_{\ell-r,k}=kx_y+O(\sqrt{k})$ uniformly in $\ell$. Choose a fixed
integer $R>2/x_0$. For sufficiently large $k$, the blocks at levels
$\ell,\ell-1,\ldots,\ell-R$ contain more than $k$ entries, so the
$k$ largest preceding levels all lie in these blocks. If their block
sizes were exactly $kx_y$, their normalized deficit would be
$F(x_y)$. The actual boundary of each block differs by
$O(\sqrt{k})$, and all relevant deficits are at most $R$; hence the
total deficit differs by at most $O(\sqrt{k})$. This proves
\cref{eq:continuum-deficit-approximation}.

Because $F$ is strictly decreasing and
$x_y=(1-\varepsilon)h(y)<h(y)$, the function
$y\mapsto F((1-\varepsilon)h(y))-F(h(y))$ is continuous and strictly
positive on $[0,Y]$. It therefore has a positive minimum $\delta$.
Since $F(h(y))=y^2$, \cref{eq:continuum-deficit-approximation} gives
$$
D_{\ell,k}
\ge k\left(y^2+\delta-O(k^{-1/2})\right)
\ge \ell^2+\frac{\delta k}{2}
$$
for all sufficiently large $k$, uniformly over
$\ell\le Y\sqrt{k}$. Since the final quantity exceeds
$\ell(\ell-1)+1$, the last copy of every block is admissible. Earlier
copies in the same block have fewer zero-deficit predecessors and
therefore only larger deficit, so the entire sequence is admissible.

The floor in the definition of $q_{\ell,k}$ contributes total error
$O(\sqrt{k})$. A Riemann-sum approximation then gives
$$
\begin{aligned}
\sum_{\ell\le Y\sqrt{k}}q_{\ell,k}
&=(1-\varepsilon)k
  \sum_{\ell\le Y\sqrt{k}}h(\ell/\sqrt{k})+O(\sqrt{k})\\
&=(1-\varepsilon)k^{3/2}
  \int_0^Y h(y)\,dy+o(k^{3/2}).
\end{aligned}
$$
Since $L_k$ is the maximum length of an admissible sequence,
$$
\liminf_{k\to\infty}\frac{L_k}{k^{3/2}}
\ge(1-\varepsilon)\int_0^Y h(y)\,dy.
$$
Letting $\varepsilon\downarrow0$ and then $Y\to\infty$ proves the
claim.
\end{proof}

\begin{corollary}
\label{cor:refined-asymptotic-full-witness-ejr-plus-lower-bound}
As $k\to\infty$,
$$
Q_{\mathrm{full}}^{\EJR}(k)
\ge(\gamma-o(1))k^{3/2}.
$$
This lower bound also applies to $Q_{\mathrm{cand}}^{\EJR}(k)$.
\end{corollary}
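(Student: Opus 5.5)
This corollary follows by combining three earlier results in the same way \cref{cor:asymptotic-full-witness-ejr-plus-lower-bound} did.

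First, \cref{thm:admissible-sequence-full-witness-lower-bound} shows that any admissible sequence of length $T$ forces $Q_{\mathrm{full}}^{\EJR}(k)\ge T$. Second, \cref{prop:greedy-level-recurrence-optimal} shows that the greedy recurrence attains the maximum admissible length $L_k$. Together these give $Q_{\mathrm{full}}^{\EJR}(k)\ge L_k$ for every $k$; this inequality was already recorded after the proposition.

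Third, I would invoke \cref{thm:asymptotic-greedy-level-sequence}, which states $\liminf_{k\to\infty}L_k/k^{3/2}\ge\gamma$. Unwinding the definition of $\liminf$: for every $\varepsilon>0$ there is $k_0$ such that $L_k\ge(\gamma-\varepsilon)k^{3/2}$ for all $k\ge k_0$. Put $\varepsilon_k:=\max\{0,\gamma-L_k/k^{3/2}\}$. Then $\varepsilon_k\to0$ and $L_k\ge(\gamma-\varepsilon_k)k^{3/2}$, which is precisely the asserted $o(1)$ form.

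For the candidate-only statement, I would apply the monotonicity inequality $Q_{\mathrm{full}}^{\EJR}(k)\le Q_{\mathrm{cand}}^{\EJR}(k)$ from \cref{eq:complexity-monotonicity}.

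None of these steps is a real obstacle. All the substantive work is in \cref{thm:asymptotic-greedy-level-sequence}, and the only point needing care here is the routine translation from $\liminf$ to an explicit $o(1)$ error term.
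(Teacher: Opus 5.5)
Your proposal is correct and matches the paper's proof. The paper likewise combines the admissible-sequence lower bound, the optimality of the greedy recurrence, and the bound $\liminf_{k\to\infty} L_k/k^{3/2}\ge\gamma$, then transfers the result to candidate-only feedback via the monotonicity inequality; your choice $\varepsilon_k=\max\{0,\gamma-L_k/k^{3/2}\}$ is a valid way to make the $o(1)$ term explicit.
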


\begin{proof}
The full-witness bound follows from
\cref{thm:admissible-sequence-full-witness-lower-bound,prop:greedy-level-recurrence-optimal,thm:asymptotic-greedy-level-sequence}.
The candidate-only statement follows from \cref{eq:complexity-monotonicity}.
\end{proof}

\end{document}